\def\EQ#1{\begin{eqnarray}#1\end{eqnarray}}
\def\H {{\mathcal H}}
\newcommand{\AR}[2][c]{$$\begin{array}[#1]{lllllllllllllll}#2\end{array}$$}
\newtheorem{Theorem}{Theorem}
\newtheorem{Lemma}{Lemma}
\newtheorem{definition}{Definition}
\DeclareMathOperator{\Tr}{\operatorname{Tr}}
\begin{document}
\title{Garbled Quantum Computation}

\author{Elham Kashefi}
\affiliation{School of Informatics, University of Edinburgh,10 Crichton Street, Edinburgh EH8 9AB, UK}
\affiliation{CNRS LTCI, Departement Informatique et Reseaux, UPMC - LIP6, 4 place Jussieu 75252 PARIS CEDEX 05, France}
\author{Petros Wallden}
\affiliation{School of Informatics, University of Edinburgh,10 Crichton Street, Edinburgh EH8 9AB, UK}

\begin{abstract}
The universal blind quantum computation protocol (UBQC) \cite{bfk} enables an almost classical client to delegate a quantum computation to an untrusted quantum server (in form of a garbled quantum computation) while the security for the client is unconditional. In this contribution we explore the possibility of extending the verifiable UBQC \cite{fk,KW15}, to achieve further functionalities as was done for classical garbled computation. First, exploring the asymmetric nature of UBQC (client preparing only single qubits, while the server runs the entire quantum computation), we present a  ``Yao'' type protocol for secure two party quantum computation. Similar to the classical setting \cite{Yao86} our quantum Yao protocol is secure against a specious (quantum honest-but-curious) garbler, but in our case, against a (fully) malicious evaluator. Unlike the protocol in  \cite{DNS10}, we do not require any online-quantum communication between the garbler and the evaluator and thus no extra cryptographic primitive.  This feature will allow us to construct a simple universal one-time compiler for any quantum computation using one-time memory, in a similar way with the classical work of \cite{GKR08} while more efficiently than the previous work of \cite{BGS13}.
\end{abstract}

\keywords{blind quantum computation; Yao's two-party computation protocol; one-time memory}

\maketitle

\section{Introduction}

Future information and communication networks will 
consist of both classical and quantum devices, some of which are expected to be dishonest. 
These devices will have various different 
functionalities, ranging from simple routers to servers executing quantum algorithms. Within the last few years, anticipating this development 
has led to the formation and growth
of the field of delegated quantum computing \cite{Childs,AS06,bfk,abe,ruv2,BGS13}. Among them is the universal blind quantum computation (UBQC) protocol of ~\cite{bfk} which is developed based on 
the measurement-based quantum computation model (MBQC) \cite{onewaycomputer} that appears 
as the most promising physical implementation 
for a networked architecture \cite{NFB14}. In the UBQC framework, the only quantum requirement for the client 
is the offline creation of random single qubit states, which is a currently available technology, and has been 
demonstrated experimentally \cite{BKBFZW11}.

The MBQC model of computation, can be viewed 
as a set of classical instructions steering a quantum computation performed on a highly entangled quantum state. The classical outcomes of the single-system 
measurements that occur \emph{during} the computation, 
are in general 
randomly distributed bits with no significance 
for the final output of the 
computation. This enables 
one to use relatively basic obfuscation techniques in order to prevent an untrusted operator, 
implementing 
an MBQC computation, to access the true 
flow of information. This key observation has led to 
an entirely new approach to quantum verification  that exploits cryptographic techniques 
\cite{fk,KKD14,gkw2015,kdk2015,abe,DFPR13,ruv2,Mckague13,B15,GWK2017}. The core idea is to encode simple \emph{trap} computations within a target computation, 
run on a remote device. This is done in such a way that the computation is not affected, while in the same time, reveals 
no information to the server. 
The correctness of the overall 
computation is 
tested by verifying that 
the trap computations were done correctly. The latter, being significantly less costly, leads to efficient verification schemes. This approach of quantum verification has been 
recently used to obtain 
specific cryptographic primitives such as  quantum one-time program \cite{BGS13} and secure two-party quantum computation \cite{DNS12}, that are also the main focus 
of this paper. 


\subsection{Our contribution}

We will explore two 
extensions of the verifiable universal blind quantum computing (VUBQC) protocols, that are build based on 
measurement-based quantum computation, 
in order to achieve new functionalities to be implemented in 
the quantum network setting \cite{NFB14}. The essential novelty of our approach is that 
the 
client-server setting 
allows different participants to have 
different quantum technological requirements. 
As a result, in our proposed two-party primitives, one party (garbler or sender) remains as classical as possible (with no need for quantum memory), while it is only the other party (evaluator or receiver) that requires 
access to a quantum computer. Moreover, the required offline initial quantum communication between the participants is 
also limited to exchange of simple single quantum states that can be generated, for example, in any quantum key distribution network (as it was proven recently in \cite{DK16}). Finally, all the utilised  sub-protocols in our schemes could be reduced to the core protocols of offline preparation \cite{DK16} and verifiable universal blind quantum computation \cite{fk,DFPR13} that are both proven to be composably secure in the abstract cryptography framework of \cite{MR11}. 
For simplicity, we only present the stand-alone security proof for our protocols, 
as we follow the framework of \cite{DNS10} and use simulation-based techniques. 
Concretely we present two new protocols:

\begin{enumerate}

\item In Section \ref{sec:2PQC} we present a protocol 
for secure two party quantum computation, that we refer to as QYao. This protocol involves two parties with asymmetric roles, that wish to securely compute any given unitary on a joint quantum input. 
Similar to the classical protocol 
of \cite{Yao86}, in QYao an honest-but-curious (formally defined for the quantum setting in \cite{DNS10}) client, capable of preparing only random single qubit and performing constant depth classical computation, ``garbles'' the entire unitary. The fully malicious server/evaluator, 
receives 
instructions from garbler and 
insert their 
input either using 
a (classical) oblivious transfer (OT), or by 
a quantum input insertion scheme 
secure against the 
honest-but-curious garbler. The evaluator 
performs the computation, 
extracts their own 
output qubits and returns the remaining output qubits to the garbler. 
After the garbler verifies 
the computation, the garbler releases the encryption keys of the evaluator output qubits.  Unlike the classical setting, our proposed QYao protocol is interactive, but importantly, 
only 
\emph{classical} online communication is required. 
One gain we obtain, is a boost in security since, apart from the initial input exchange where classical OT is needed, the rest of the protocol is unconditionally secure. 
If one could 
replace the classical OT's in our QYao protocol with new primitives, such as the unconditionally secure relativistic primitives  
\cite{Kent99}, the security would be extended to fully unconditional. 
In Section \ref{sec:simulators} we prove the security of the protocol in the ideal-real word paradigm using a simulation-based technique. 

\item In Section \ref{sec:non-interactive} we follow the classical approach of \cite{GKR08} and make our QYao protocol non-interactive by using the classical hardware primitive ``one-time memory'' (OTM), which is essentially a non-interactive oblivious transfer. By using OTM the need for
initial OT calls, in our QYao protocol, is also removed leading to a one-time universal compiler for any given quantum computation. Such one-time quantum programs can be executed only once, where the input can be chosen at any time. The one-time programs have a wide range of applications ranging from program obfuscation, software protection to temporary transfer of cryptographic ability \cite{GKR08}. Classically the challenge in lifting the Yao protocol for two-party computation to one-time program, is addressing the issue of malicious adversary. However, our QYao protocol is already secure against a malicious evaluator without any extra primitive or added overhead. Our quantum one-time program is therefore a straightforward translation of QYao. The same technique is applicable to essentially any cryptographic protocols developed in the measurement-based model that require offline quantum preparation and online classical communications. In any such MBQC computation there are exponentially many branches of computation (this is due to the non-deterministic nature of the intermediate single-qubits measurement that occur during 
the computation). However, we prove that a single OTM (of constant size) per computation qubit suffices to make the QYao non-interactive. This is due to the fact that in any constant degree graph-state (the underlying entanglement resources for an MBQC computation), the flow of information acts locally. We prove that this is the case for the 
recourse in the VUBQC protocol introduced in \cite{KW15}, that 
our protocol is based. Hence each classical message between parties, in the interactive case, depends 
on constant number of previous messages which 
allows us to remove this interaction using a simple constant-size OTM. 

\end{enumerate}

\subsection{Related works}

Deriving new quantum cryptographic protocols for functionalities 
beyond the secure key exchange \cite{BBE92} is an active field of research (see a recent review for a summary \cite{BS16}). In particular, our contributions are directly linked to the works on verifiable blind quantum computing (VUBQC) \cite{fk,abe,ruv2}, secure two party quantum computing \cite{DNS10,DNS12} and quantum one-time program and quantum obfuscation \cite{BGS13,GF16}. The main focus in VUBQC research is the ability of a limited verifier to certify the correctness of the quantum computation performed by a remote server. Recently, many such protocols have been developed achieving different quantum technological requirement for the verifier or the prover \cite{KKD14,kdk2015,Mckague13,Broadbent15,HT15}. We have used 
the 
optimal (from the verifier point of view) VUBQC that has the additional property of a local 
and independent trap construction \cite{KW15}. This property allows us to construct a simple yet generic scheme for the server (receiver, evaluator) input insertion and output extraction that could be applicable to other VUBQC schemes and that might lead to further 
properties not present 
in the scheme of \cite{KW15}. 

Due to the impossibility results of \cite{BGS13,GF16}, having extra primitives such as OTM is unavoidable in order to achieve program hiding. However, using 
the MBQC framework with 
the 
verification protocol 
of \cite{KW15} leads to a simpler procedure for 
removing the classical interaction compared to the initial work that pioneered this approach 
\cite{BGS13}. Furthermore, due to the direct utilisation of OTM's instead of classical one-time programs, our scheme could be 
applicable to other VUBQC protocols that might emerge in the future. Another way making protocols non-interactive is based on the security of other classical primitives 
as it was done in \cite{DSS16} using fully homomorphic encryption. 
However, to apply this method for non-interactive secure two-party quantum computation, is important to enable verifiability from the side of the sender, something naturally present in our scheme but not necessarily in other schemes.
Finally the VUBQC framework which naturally separates the classical and quantum part of the computation allows us to construct a client-server scheme for 
secure two-party quantum computation 
that unlike the work of \cite{DNS10} removes the requirement of \emph{any} extra cryptographic primitive that was 
an open question in \cite{DNS10}. Here, we need to clarify that, 
while in the classical Yao with an honest-but-curious garbler 
the primitive of OT is required for the evaluator's input insertion, 
in our QYao this is not the case. This is simply due to the fact that the notion of quantum specious adversary that was formalised in \cite{DNS10} is more restrictive than classical honest-but-curious in certain cases (see Appendix \ref{app:specious}). Hence instead of utilising a classical OT, one could simply devise a quantum communication scheme, as we present here, 
in order to achieve the same goal of secure input insertion 
against a (less powerful) adversarial garbler.  

Finally, the server - client setting can also be exploited to achieve simpler (implementation-wise) multiparty blind quantum computation protocols, where multiple clients use a single server to jointly and securely perform a quantum computation \cite{KP16}.

\section{Preliminaries}

\subsection{Verifiable blind quantum computation}

We will assume that the reader is familiar with the measurement-based quantum computation (MBQC) model \cite{onewaycomputer} that is known to be the same as any gate teleportation model \cite{childs2005unified,mbqc}. In this section we introduce MBQC and use it to revise a blind quantum computation (server performs computation without learning input/output or computation) \cite{bfk} and a verifiable blind quantum computation (client can also verify that the computation was performed correctly) \cite{fk} protocols. The general idea behind the MBQC model is: start with a large and highly entangled generic multiparty state (the resource state) and then perform the computation by carrying out single-qubit measurements. To perform a desired quantum computation, each qubit should be measured in a suitable basis, and this basis is (partly) determined by some default measurement angles $\phi_i$. There is an 
order  that the measurements should occur which is determined by the flow of the computation (see also later), and 
the basis that each qubit is measured
generally depends on the outcomes of previous measurements  (and the default measurement angle $\phi_i$). The resource states used are known as \emph{graph states} as they can be fully determined by a given graph (see details in \cite{hein2004multiparty}). A way  
to construct a graph state given the graph
description is to assign to each vertex of the graph a qubit initially prepared in the state $\ket{+}$ and for each edge of the graph to perform a $\mathrm{controlled-}Z$ gate to the two adjacent vertices. For completeness, in Appendix \ref{app:mbqc} we give the expression for the measurement angles of each qubit, and an example of measurement pattern (graph state and default measurement angles $\phi_i$) that implement each gate from a universal set of gates.

If one starts with a graph state where qubits are prepared in a rotated basis $\ket{+_\theta}=1/\sqrt{2}(\ket{0}+e^{i\theta}\ket{1})$
instead, then it is possible to perform the same computation with the non-rotated graph state by preforming measurements in a similarly
rotated basis. This observation led to the formulation of the 
\emph{universal blind quantum computation} (UBQC) protocol \cite{bfk} which hides the computation in a client-server setting.
Here a client prepares rotated qubits, where the rotation is only known to them. The client sends the qubits to the server, as soon as they
are prepared (hence there is no need for any quantum memory). Finally, the client instructs the server to perform entangling operations according to the graph and to carry out single qubits measurements in suitable angles in order to complete the desired computation (where an extra randomisation $r_i$ of the outcome of the measurements is added). During the protocol the client receives  the 
outcomes of previous measurements and can  
classically evaluate the next measurement angles. Due to the unknown rotation and the extra outcome randomisation, the server does not learn what computation they actually perform.

The UBQC protocol can be uplifted to a verification protocol where the client 
can detect a cheating
server. 
To do so, the client for certain vertices (called dummies) sends states from the set
$\{\ket{0},\ket{1}\}$ which has the same effect as a $Z$-basis measurement on that vertex. In any graph state if a vertex is measured
in the $Z$-basis it results in a new graph where that vertex and all its adjacent edges are removed. During the protocol the server does not
know for a particular vertex if the client sent a dummy qubit or not. This enables the client to isolate some qubits (disentangled
from the rest of the graph). Those qubits have fixed deterministic outcomes if the server followed honestly the
instructions. The positions of those isolated qubits are unknown to the server and the client uses them as traps to test that the server
performs the quantum operations that is requested. This technique led to 
the first universal VUBQC protocol \cite{fk} and was subsequently extended to many other protocols depending on what 
optimised construction was used, what was the required quantum technology and which was the desired level of security. It is clear that the more (independent) traps we have within a graph, the higher the probability of detecting a deviation. In \cite{KW15} the authors gave a construction that, while maintaining an overhead which is linear in the size of the input, introduced multiple traps and in particular of the same number as the computation qubits. We will be using that construction, not only for efficiency reasons, but because this construction is ``local'', and revealing to the server partial information about the graph does not compromise the security. This is important for our QYao protocol, since in this case, the server has input and output, and needs to know in which parts of the graph their input/output belongs. The construction is summarised below:

\begin{enumerate}
\item We are given a base-graph $G$ that has vertices $v\in V(G)$ and edges $e\in E(G)$.
\item For each vertex $v_i$, we define a set of three new vertices $P_{v_i}=\{p^{v_i}_1,p^{v_i}_3,p^{v_i}_3\}$. These are called \emph{primary} vertices.
\item Corresponding to each edge $e(v_i,v_j)\in E(G)$ of the base-graph that connects the base vertices $v_i$ and $v_j$, we introduce a set of nine edges $E_{e(v_i,v_j)}$ that connect each of the vertices in the set $P_{v_i}$ with each of the vertices in the set $P_{v_j}$.


\item We replace every edge in the resulted graph with a new vertex connected to the two vertices originally joined by that edge. The new vertices added in this step are called \emph{added} vertices. This is the \emph{dotted triple-graph} $DT(G)$.
\end{enumerate}
We can see (Figure \ref{figure1}) that each vertex in the $DT(G)$, corresponds to either a vertex (for primary vertices) or an edge (for added vertices) of the base-graph. The precise edge/vertex of the base-graph that each vertex $v\in DT(G)$ belongs is called \emph{base-location}. The nice property of this graph, is that one can reduce this graph to three copies of the base-graph by ``breaking'' some edges (which can be done using dummy qubits). Moreover, the choice 
of which vertex belongs to each of the three graphs, is essentially independent (for different base-locations corresponding to vertices of the base-graph). With this construction, we can eventually use one copy of the base-graph for the computation, while make multiple single traps from the two remaining copies (see Figure \ref{figure1}). The choice of which vertex belongs to which graph is called \emph{trap-colouring}. This is a free choice made by the client, and the fact that the server is ignorant of the actual trap-colouring guarantees the security (see details in \cite{KW15}). 

\begin{definition}[Trap-Colouring]\label{trap colouring} We define trap-colouring to be an assignment of one colour to each of the vertices of the dotted triple-graph that is consistent with the following conditions. 
\begin{enumerate}
\item[(i)] Primary vertices are coloured in one of the three colours white or black (for traps), and green (for computation). There is an exception for input base-locations (see step (v)). 
\item[(ii)] Added vertices are coloured in one of the four colours white, black, green or red. 
\item[(iii)] In each primary set $P_v$ there is exactly one vertex of each colour. 
\item[(iv)] Colouring the primary vertices fixes the colours of the added vertices: Added vertices that connect primary vertices of different colour are red. Added vertices that connect primary vertices of the same colour get that colour.
\item[(v)] For input base-locations, instead of green we have a blue vertex (but all other rules, including how is connected with the other vertices apply in the same way as if it was green).
\end{enumerate}
\end{definition}

\begin{figure}[h]
\includegraphics[width=1\columnwidth]{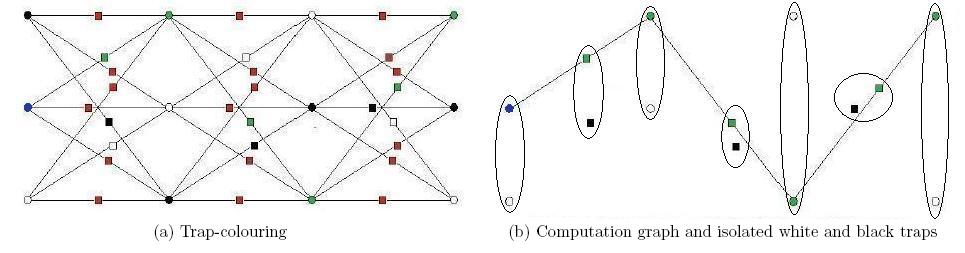}

\caption{Dotted-triple-graph. Circles: primary vertices with base-location of vertex of the base-graph; Squares: added vertices with base-location of edges of the base-graph. (a) Trap-colouring. Blue: input qubits; Green: gate qubits; White/black: trap qubits; Red: wiring qubits. Client chooses the colours randomly for each vertex with base-location of vertex of the base-graph and prepares each qubit individually before sending them one by one to the server to entangle them according to the generic construction. (b) After entangling, the breaking operation defined by the wiring qubits will reduce the graph in (a) to the computation graph and for each vertex a corresponding trap/tag qubits.}
\label{figure1}
\end{figure}

For completeness we give the basic verification protocol from \cite{KW15} that we use. 

\begin{algorithm}[H]
\caption{Verifiable Universal Blind Quantum Computation using dotted triple-graph (with Fault-tolerant Encoding) - Taken from \cite{KW15}}
 \label{prot:KW15}
\noindent A standard labelling of vertices of the dotted triple-graph $DT(G)$ is known to both client and server.\\ \noindent The number of qubits is at most $3N(3c+1)$ where $c$ is the maximum degree of the base graph $G$.\\
\noindent$\bullet$ \textbf{Client's resources} \\
-- Client is given a base graph $G$. The corresponding dotted graph state $\ket{D(G)}$ is generated by graph $D(G)$ that is obtained from $G$ by replacing every edge with a new vertex connected to the two vertices originally joined by that edge.\\
-- Client is given an MBQC measurement pattern $\mathbb{M}_{\textrm{Comp}}$ which: Applied on the dotted graph state $\ket{D(G)}$ performs the desired computation, in a fault-tolerant way, that can detect or correct errors fewer than $\delta/2$.\\
-- Client generates the dotted triple-graph $DT(G)$, and selects a trap-colouring according to definition \ref{trap colouring} which is done by choosing independently the colours for each set $P_v$.\\
-- Client for all red vertices will send dummy qubits and thus performs break operation.\\
-- Client chooses the green graph to perform the computation.\\ 
-- Client for the white graph sends dummy qubits for all added qubits $a^{e}_w$ and thus generates white isolated qubits 
at each primary vertex set $P_{v}$. Similarly for the black graph the client sends dummy qubits for the primary qubits $p^v_b$ and thus generates black isolated qubits 
at each added vertex set $A_{e}$.\\
\noindent -- The dummy qubits position set $D$ is chosen as defined above (fixed by the trap-colouring).\\
\noindent -- A binary string $\mathbf{s}$ of length at most $3N(3c+1)$ represents the measurement outcomes. It is initially set to all zero's.\\
\noindent -- A sequence of measurement angles, $\phi=(\phi_i)_{1\leq i \leq 3N(3c+1)}$ with $\phi_i \in A=\{0,\pi/4,\cdots,7\pi/4\}$, consistent with $\mathbb{M}_{\textrm{Comp}}$. We define $\phi_i'(\phi_i,\mathbf{s})$ to be the measurement angle in MBQC, when corrections due to previous measurement outcomes $\mathbf{s}$ are taken into account (the function depends on the specific base-graph and its flow, see e.g. \cite{bfk}). We also set $\phi'_i = 0$ for all the trap and dummy qubits. 
The Client chooses a measurement order on the dotted base-graph $D(G)$ that is consistent with the flow of the computation (this is known to Server). The measurements within each set $P_v,A_e$ of $DT(G)$ graph are ordered randomly.
\\
\noindent -- $3N (3c+1)$ random variables $\theta_i$ with value taken uniformly at random from $A$.\\
\noindent -- $3N (3c+1)$ random variables $r_i$ and $|D|$ random variable $d_i$ with values taken uniformly at random from $\{0,1\}$. \\
\noindent -- A fixed function $C(i, \phi_i, \theta_i, r_i, \mathbf{s})= \phi_i'(\phi_i,\mathbf{s})+\theta_i+\pi r_i$ that for each non-output qubit $i$ computes the angle of the measurement of qubit $i$ to be sent to the Server.\\
-- Continues
\end{algorithm}

\begin{algorithm}[H]
\caption{continuing: VUBQC with DT(G)}
\noindent$\bullet$ \textbf{Initial Step} \\
-- \textbf{Client's move:} Client sets all the value in $\mathbf{s}$ to be $0$ and prepares the input qubits as
\AR{
\ket e = X^{x_1} Z(\theta_1) \otimes \ldots \otimes  X^{x_l} Z(\theta_l) \ket I
}
and the remaining qubits in the following form
\AR{
\forall i\in D &\;\;\;& \ket {d_i} \\ 
\forall i \not \in D &\;\;\;& \prod_{j\in N_G(i) \cap D} Z^{d_j}\ket {+_{\theta_i}}
}
and sends the Server all the $3N (3c+1)$ qubits in the order of the labelling of the vertices of the graph.\\
-- \textbf{Server's move:} Server receives $3N(3c+1)$ single qubits and entangles them according to $DT(G)$.\\

\noindent$\bullet$ \textbf{Step $i: \; 1 \leq i \leq 3N (3c+1)$}

-- \textbf{Client's move:} Client computes the angle $\delta_i=C(i, \phi_i, \theta_i, r_i, \mathbf{s})$ and sends it to the Server.\\ 
-- \textbf{Server's move:} Server measures qubit $i$ with angle $\delta_i$ and sends the Client the result $b_i$. \\
-- \textbf{Client's move:} Client sets the value of $s_i$ in $\mathbf{s}$ to be $b_i+r_i$. \\

\noindent$\bullet$ \textbf{Final Step:}

-- \textbf{Server's move:} Server returns the last layer of qubits (output layer) to the Client.\\

\noindent$\bullet$  \label{step:Alice-prep} \textbf{Verification} \\
-- After obtaining the output qubits from the Server, the Client measures the output trap qubits with angle $\delta_t=\theta_t+r_t\pi$ to obtain $b_t$.

-- Client accepts if $b_i = r_i$ for all the white (primary) and black (added) trap qubits $i$.

\end{algorithm}



\subsection{Two-party quantum protocols}

The impossibility of achieving 
unconditionally secure two-party cryptographic protocols has led to the definition and use of simpler hardware or software primitives to form the basis for the desired functionalities. 
A one out of $n$ oblivious transfer (OT) is a two party protocol where one party (Alice) has input $n$ messages $(x_1,\cdots,x_n)$ and the other party (Bob) inputs a number $c\in\{1,\cdots,n\}$ and receives the 
message $x_c$ with the following guarantees: Bob learns nothing about the other messages $x_i|i\neq c$ and Alice is ``oblivious'' (does not know) which message Bob obtained (i.e. does not know the value $c$) \cite{NP99}. A hardware token that implements a non-interactive OT is called one-time-memory (OTM) \cite{GKR08}. We will utilise a one out of $n$ OTM, where again Alice stores in the OTM $n$ strings $(x_1,\cdots,x_n)$, Bob specifies a value $c$ and the OTM reveals $x_c$ to Bob and then self-destructs: i.e. the remaining strings $x_i|i\neq c$ are lost forever.

The first paper that studied secure two party quantum computation is \cite{DNS10} and we follow their notations and conventions. We have two parties $A,B$ with registers $\mathcal{A},\mathcal{B}$ and an extra register $\mathcal{R}$ with $\dim \mathcal{R}=(\dim\mathcal{A}+\dim\mathcal{B})$. The input state is denoted $\rho_{in}\in D(\mathcal{A}\otimes\mathcal{B}\otimes\mathcal{R})$, where $D(\mathcal{A})$ is the set of all possible quantum states in register $\mathcal{A}$. We also denote with $L(\mathcal{A})$ the set of linear mappings from $\mathcal{A}$ to itself, and  the superoperator $\phi:L(\mathcal{A})\rightarrow L(\mathcal{B})$ that is completely positive and trace preserving, is called a \emph{quantum operation}. 
We denote $\mathbb{I}_{\mathcal{A}}$ 
the identity operator 
in register $\mathcal{A}$. The ideal output is then given by $\rho_{out}=(U\otimes\mathbb{I}_{\mathcal{R}})\cdot \rho_{in}$, where for simplicity we write $U\cdot \rho$ instead of $U\rho U^\dagger$. For two states $\rho_0,\rho_1$ we denote the trace norm distance $\Delta(\rho_0,\rho_1):=\frac12 \lVert \rho_0-\rho_1\rVert$. If $\Delta(\rho_0,\rho_1)\leq\epsilon$ then any process applied on $\rho_0$ behaves as for $\rho_1$ except with probability at most $\epsilon$.

\begin{definition}[taken from \cite{DNS10}] A $n$-step two party strategy with oracle calls is denoted $\Pi^O=(A,B,O,n)$:
\begin{enumerate}
\item input spaces $\mathcal{A}_0,\mathcal{B}_0$ and memory spaces $\mathcal{A}_1,\cdots,\mathcal{A}_n$ and $\mathcal{B}_1,\cdots,\mathcal{B}_n$
\item $n$-tuple of quantum operations $(L_1^A,\cdots,L_n^A)$ and $(L_1^B,\cdots,L_n^B)$ such that $L_i^A: L(\mathcal{A}_{i-1})\rightarrow L(\mathcal{A}_i)$ and similarly for $L_i^B$.
\item $n$-tuple of oracle operations $(\mathcal{O}_1,\cdots,\mathcal{O}_n)$ where each oracle is a global operation for that step, $\mathcal{O}_i:L(\mathcal{A}_i\otimes\mathcal{B}_i)\rightarrow L(\mathcal{A}_i\otimes\mathcal{B}_i)$
\end{enumerate}
\end{definition}
The trivial oracle is a communication oracle, that transfers some quantum register from one party to another. Protocols that have only such oracles are called bare model. Other oracles include calls to other cryptographic primitives. The quantum state  in each step of the protocol is given by:
\EQ{\rho_1(\rho_{in})&:=&(\mathcal{O}_1\otimes\mathbb{I})(L_1^A\otimes L_1^B\otimes\mathbb{I})(\rho_{in})\nonumber\\
\rho_{i+1}(\rho_{in})&:=&(\mathcal{O}_{i+1}\otimes\mathbb{I})(L_{i+1}^A\otimes L_{i+1}^B\otimes\mathbb{I})(\rho_i(\rho_{in}))
}


The security definitions are based on the ideal functionality of two party quantum computation 
(2PQC) that takes a joint input $\rho_{in}\in \mathcal{A}_0\otimes\mathcal{B}_0$, obtains the state $U\cdot\rho_{in}$ and returns to each party their corresponding quantum registers. A protocol $\Pi^O_U$ implements the protocol securely, if no possible adversary in any step of the protocol, can 
distinguish whether they interact with the real protocol or with a simulator that has only access to the ideal functionality. When a party is malicious we add the notation ``$\sim$'', e.g. $\tilde A$.

\begin{definition}[Simulator]
$\mathcal{S}(\tilde{A})=\langle (\mathcal{S}_1,\cdots,\mathcal{S}_n),q \rangle$ is a simulatore for adversary $\tilde{A}$ in $\Pi^O_U$ if it consists of:
\begin{enumerate}
\item operations where $\mathcal{S}_i:L(\mathcal{A}_0)\rightarrow L(\tilde{\mathcal{A}_i})$,
\item  sequence of bits $q\in\{0,1\}^n$ determining if the simulator calls the ideal functionality at step $i$ ($q_i=1$ calls the ideal functionality).
\end{enumerate}
\end{definition}
Similarly for the other adversaries.

Given input $\rho_{in}$ the simulated view for step $i$ is defined as:
\EQ{\nu_i(\tilde A,\rho_{in}):=\Tr_{\mathcal{B}_0}\left((\mathcal{T}_i\otimes\mathbb{I})(U^{q_i}\otimes\mathbb{I})\cdot \rho_{in}\right)
}
and similarly for the other party. 

\begin{definition}[Privacy] \label{def:private}We say that the protocol is $\delta$-private if for all adversaries and for all steps $i$: 
\EQ{\Delta(\nu_i(\tilde{A},\rho_{in}),\Tr_{\mathcal{B}_i}(\tilde{\rho}_i(\tilde{A},\rho_{in})))\leq\delta}
where $\tilde{\rho}_i(\tilde{A},\rho_{in})$ the state of the real protocol with corrupted party $\tilde{A}$, at step $i$. 
\end{definition}
In classical cryptography a type of adversary commonly considered, is the ``honest-but-curious''. This adversary follows the protocol but also keeps records of their actions and attempts to learn from those more than what they should. This type of weak adversary has been proven very useful in many protocols, since it typically constitutes the first step in constructing protocols secure against more powerful (even fully malicious) adversaries. 

Since quantum states cannot be copied, one cannot have a direct analogue of honest-but-curious. Instead, we have the notion of \emph{specious} adversaries \cite{DNS10}, where they can deviate as they wish, but in every step, if requested, should be able to reproduce the honest global state by acting only on their subsystems. More formally:

\begin{definition}[Specious]
An adversary $\tilde{A}$ is $\epsilon$-specious if there exists a sequence of operations $(\mathcal{T}_1,\cdots,\mathcal{T}_n)$, where $\mathcal{T}_i: L(\tilde{\mathcal{A}}_i)\rightarrow L(\mathcal{A}_i)$ such that:
\EQ{\label{eq:specious}
\Delta\left((\mathcal{T}_i\otimes\mathbb{I})(\tilde\rho_i(\tilde A,\rho_{in})),\rho_i(\rho_{in})\right)\leq\epsilon
}
\end{definition}
Note, that this (standard) definition of specious adversary, allows for different interpretations that lead to stronger and weaker versions of the adversary. 
These subtleties are explained in Appendix \ref{app:specious}. Here we stress that we take the weaker notion that takes Eq. (\ref{eq:specious}) in the stricter sense, while in \cite{DNS10} the authors implicitly used a stronger version. This difference led to our paper evading certain impossibility results mentioned in \cite{DNS10}. A detailed discussion of this is given in Appendix \ref{app:specious}.




A 2PQC protocol needs to be (by definition) private. Moreover, it may (or may not) have two extra properties, \emph{verification} and \emph{fairness}. Verification means that each party, when they receive their output, not only are sure that nothing leaked about their input, but also they can know whether the outcome they received is correct or corrupted. Fairness is the extra property that no party should be able to obtain their output and \emph{after} that cause an abort (or a corruption) for the other party. Even in classical 2PC, it is impossible to have fairness against fully malicious adversaries (without any extra assumption). We consider two-party protocols $\Pi^0=(C,S,O,n)$ where $C$ denotes the client and $S$ the server.

\section{Secure two-party quantum computation (QYao) \label{sec:2PQC}}

The first extension of VUBQC that we will explore, is to use it in order to construct a 2PQC protocol similar to classical Yao protocol of \cite{Yao86}, that we refer to as QYao, with the sender-garbler being the client, and the receiver-evaluator being the server. As in Yao protocol, we assume that the client, when adversarial, is specious, however, we make no such assumption for the server, that is assumed to be fully malicious. In \cite{Yao86} the server needs to use OT in order to insert their input. We, instead, use a scheme to insert the quantum input that requires no such functionality. The reason this is possible, is  because we make the assumption that the client is specious and in specific situations a specious adversary is weaker than honest-but-curious (see Appendix \ref{app:specious-classical}). For the specific case of classical input/output, one can modify our protocol to make it secure against classical honest-but-curious adversary, by replacing the input injection subprotocol (see below) with OT.

Our QYao protocol provides a one-time \emph{verifiable} 2PQC similar to the classical setting as recently shown for the original Yao protocol in \cite{GGB10}. The speciousness of the client restricts any possible deviations on their side, while a malicious server would be detected through the hidden traps of the VUBQC protocol. The blindness property of VUBQC also guarantees that the server learns nothing before the client is certain that there was no deviation and returns the suitable keys for the decryption of the output.

The major differences that QYao protocol has in comparison with regular VUBQC \cite{KW15} is that the server needs to provide (part of) the input and at the end keeps (part of) the output. There are multiple ways to modify the VUBQC protocol, we use a direct approach at the cost of having two rounds of quantum communication during input preparation and two rounds of quantum communication during output read-out. However, there is \emph{no} quantum communication during the evaluation stage. Note that if we restrict the protocol to classical input/output, we would avoid \emph{all} quantum communication apart from the initial and offline sending of pre-rotated qubits from the client to the server (but we would then need OT, as in the classical Yao, for the server to insert their input).

\subsection{Server's input injection}

The qubits composing the DT(G), (see Figure \ref{figure1}), are prepared by the client, this is crucial in order to ensure that the server is blind about the positions of the traps. However, the server should somehow insert their input in the DT(G). We present here, for simplicity, the case that there is a single qubit input, but it generalises trivially.
To do so, the server encrypts their input using secret keys $(m_{z,i},m_{x,i})$ and sends the state $X^{m_{x,i}}Z^{m_{z,i}}\ket{\Phi_S}$ to the client, to insert it randomly in the corresponding base-location.  The state $\ket{\Phi_S}$ is assumed to be pure as we have included its purification in the hands of the server. The client, encrypts further the server's input by applying an extra random $Z(\theta'_i)$ and an $X^{x'_i}$ correction to obtain the state:

\EQ{X^{x'_i}Z(\theta'_i)X^{m_{x,i}}Z^{m_{z,i}}\ket{\Phi_S}
\label{input1}} 
This extra encryption is needed in order to hide the future actual measurement of the input qubit (to be performed by the server) and ensure that no information about 
the position of the traps is leaked. 
Trap hiding also requires the client to return two extra qubits for the input insertion (see Figure \ref{figure2}) that will belong to trap graphs (called the white-graph and black-graph). The white qubit is a trap is prepared in a state $\ket{\theta_k}$ while the black qubit is dummy and is prepared in state $\ket{d_j}$. The three qubits will be randomly permuted by the client so that the server does not know which qubit is which. A similar procedure (with no communication from the server) is applied for client's input qubits, as well as for all the qubits corresponding to the gate computation (see Figure \ref{figure2}).

\begin{figure}
\includegraphics[width=0.8\columnwidth]{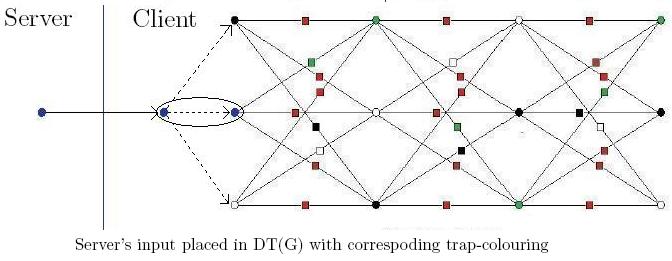}

\caption{Server gives their input (blue) and client chooses (randomly) where in the input base-location to place the input. The random choice is highlighted. The trap-colouring is filled correspondingly, after the random choice is made.}
\label{figure2}
\end{figure}

Finally, after the server has received all the qubits, announces the secret keys $(m_{x,i},m_{z,i})$ for each input $i$ to the client, so that the client can update the encryption for these qubits and have $(x_i,\theta_i):=\left(x'_i+m_{x,i},(-1)^{m_{x,i}}\theta'_i+\pi m_{z,i}\right)$. With the updated encryption, the client computes the suitable measurement angles $\delta_i$. It is worth pointing out that the key releasing step from server to client could be avoided, by using classical OT to compute the measurement angles as a function of the secret parameters of the server $\delta_i(m_{x,i},m_{z,i})$ for the first two layers (that have dependency on $m_{x,i},m_{z,i}$).  While this could be necessary for future work, to construct protocols dealing with malicious client, it is not necessary for our case where the client is considered to be specious.

\begin{algorithm}[H]
\caption{Server's Input Injection}
\label{prot:prover-input}

\begin{flushleft}
\textbf{Setting:}
\vspace{-7pt}
\end{flushleft}

\noindent $\bullet$ The server has input $\ket{\Phi_S}$ that corresponds to specific positions $I_S$ of the input layer of the base-graph. For each $i$ qubit, server chooses pair of secret bits $(m_{x,i},m_{z,i})$.

\begin{flushleft}
\textbf{Instructions:}
\vspace{-7pt}
\end{flushleft}

\begin{enumerate}
\item The server sends to the client the states $X^{m_{x,i}}Z^{m_{z,i}}\ket{\Phi_S}$ for all $i\in I_S$.

\item The client prepares all the states of the DT(G) as in Protocol \ref{prot:KW15} (see \cite{KW15}) apart from the computation qubits of the server's input.

\item The client chooses at random $x'_i,\theta_i'$ for each of the server's input and obtains the states: $X^{x'_i}Z(\theta'_i)X^{m_{x,i}}Z^{m_{z,i}}\ket{\Phi_S}$. 

\item The client mixes the computation qubit of the server's input, with a dummy and a trap qubit to return the three qubits of server's input base-locations.

\item The server, after receiving the qubits, returns to the client the secret bits $(m_{x,i},m_{z,i})$ for all $i$ of their input.

\item The client computes $x_i:=x'_i+m_{x,i}$ and $\theta_i:=(-1)^{m_{x,i}}\theta'_i+\pi m_{z,i}$ and uses these $(x_i,\theta_i)$ for computing the measurement angles as in Protocol \ref{prot:KW15}.

\end{enumerate}

\begin{flushleft}
\textbf{Outcome:}
\vspace{-7pt}
\end{flushleft}

\noindent$\bullet$ The server receives a DT(G), where the quantum input is the joint input of the client and server and the related measurement angles instructions perform the desired unitary, if the client is honest. 

\end{algorithm}

\subsection{Server's output extraction}

In the regular VUBQC protocol, the server returns all the output qubits to the client. The client measures the final layer's traps to check for any deviation and \emph{then} obtains the output of the computation by decrypting the output computation qubits using their secret keys. In the 2PQC, part of the output (of known base-locations) should remain in the hands of the server. This, however, would not allow the client to check for the related traps (that could have effects on other output qubits). Similar to the input injection, the solution is obtained via an extra layer of encryption by server followed by a delayed key releasing. The server will encrypt using two classical bits $(k_x,k_z)$: $E_{k}(\ket{\psi})=X^{k_x}Z^{k_z}\ket{\psi}$, all the output qubits that correspond to server's output base-locations and \emph{then} return all the output qubits to the client. Due to the encryption, the client obtains no information about the output of the server while now has 
access to all the final trap qubits for the verification. The client returns to the server, only the computation qubits corresponding to server's output base-locations, while keeping the traps and all the qubits from other base-locations. The server reveals their keys (note that the client, being specious, has not kept the computation output qubits, because it would be impossible to reconstruct the ideal state by acting only on their systems). The client checks all the traps, and if all of them are correct, reveals the final computation output keys to the server ($\theta_i,r_i$). The server undoes the padding to reveal their output.  

\begin{algorithm}[H]
\caption{Server's Output Extraction}
\label{prot:prover-output}

\begin{flushleft}
\textbf{Setting:}
\vspace{-7pt}
\end{flushleft}

\noindent $\bullet$ The server has the state $\Tr_C(\ket{\psi(f)})$ at the corresponding positions $O_S$ of their output base-locations of the DT(G). For each $i$ qubit, server chooses pair of secret bits $(m_{x,i},m_{z,i})$. 

\begin{flushleft}
\textbf{Instructions:}
\vspace{-7pt}
\end{flushleft}

\begin{enumerate}
\item The server sends to the client the states $X^{m_{x,i}}Z^{m_{z,i}}\Tr_C(\ket{\psi(f)})$ for all $i\in O_S$.

\item The client keeps the qubits that correspond to traps and dummies in the final layer of server's output, while returns the computation qubits of server's output.

\item The server returns to the client $(m_{x,i},m_{z,i})$.

\item The client checks the traps and if correct returns to the server the final layer paddings for their output locations $(\theta_i,r_i,r_{j<i})$ and $(m_{x,i},m_{z,i})$ (that the server sent earlier).

\item The server undoes the final layer paddings and obtains the output.

\end{enumerate}

\begin{flushleft}
\textbf{Outcome:}
\vspace{-7pt}
\end{flushleft}

\noindent$\bullet$ The server obtains the computation qubits of their output base-locations unpadded $\Tr_C(U(\ket{\Psi}_C,\ket{\Phi}_S))$. 

\end{algorithm}

\subsection{The QYao protocol}

We combine Protocol \ref{prot:prover-input} and \ref{prot:prover-output} with the core VUBQC, Protocol \ref{prot:KW15}, to obtain a secure two-party protocol to compute unitaries given as Protocol \ref{prot:Qyao}. 

\begin{algorithm}[H]
\caption{Secure 2PQC - QYao}
\label{prot:Qyao}

\begin{flushleft}
\textbf{Input:}
\vspace{-7pt}
\end{flushleft}
\begin{itemize}

\item The client and server know the unitary operator $U$ that they wish to compute. The client has a description of $U$ in MBQC using resource $\ket{G}$, and maps it to the DT(G) (see Protocol \ref{prot:KW15} and \cite{KW15}). For each qubit the client knows the angles $\phi_i$ and dependencies. The base-locations of the inputs and outputs of both parties are public. The inputs of client and server are denoted correspondingly as $\ket{\Psi}_C$ and $\ket{\Phi}_S$.
\end{itemize}

\begin{flushleft}
\textbf{Output:}
\vspace{-7pt}
\end{flushleft}

\begin{itemize}
\item The client receives the subset $C_o$ of the output qubits of the final quantum state $U(\ket{\Psi}_C,\ket{\Phi}_S)$. 

\item The server receives the subset $S_o$ of the output qubits of the final quantum state $U(\ket{\Psi}_C,\ket{\Phi}_S)$. 
\end{itemize}

\begin{flushleft}
\textbf{The protocol}
\vspace{-7pt}
\end{flushleft}

\begin{enumerate}
\item Client sends all qubits of the DT(G) (choosing the random parameters $r_i,\theta_i,d_i$ and trap-colouring), where for base-locations that corresponds to the server's input, the input injection Protocol \ref{prot:prover-input} is used.

\item Server and client follow the verification protocol \ref{prot:KW15} 
until the step that the output is returned. 

\item Client and server interact according to Protocol \ref{prot:prover-output} so that the server extracts their own output.

\end{enumerate}

\end{algorithm}

\begin{Theorem}[Correctness]\label{thm:correctness} If both client and server follow the steps of Protocol \ref{prot:Qyao} then the output is correct and the computation accepted
\end{Theorem}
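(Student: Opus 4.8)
The plan is to reduce the statement to three facts that can be checked independently: that the input-injection subprotocol leaves each server-input qubit in exactly the encrypted form that the initial step of Protocol~\ref{prot:KW15} requires; that, once the input register is seeded this way, Protocol~\ref{prot:KW15} is correct and its honest run satisfies the acceptance condition, which we take as established in~\cite{KW15}; and that the key material released during output extraction inverts the entire encryption carried by the server's output qubits. Because QYao modifies the verifiable blind protocol only at the input and output layers, the correctness of the bulk of the evaluation and of all intermediate and trap measurements follows verbatim from Protocol~\ref{prot:KW15}, so all the work is concentrated at these two interfaces.

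First I would verify the input interface by Pauli and rotation tracking. Using $Z^{m}=Z(\pi m)$ and $Z(\theta)X=X\,Z(-\theta)$ up to an irrelevant global phase, the doubly encrypted state of Eq.~(\ref{input1}) satisfies $X^{x'_i}Z(\theta'_i)X^{m_{x,i}}Z^{m_{z,i}}\ket{\Phi_S}=X^{x'_i+m_{x,i}}\,Z\!\big((-1)^{m_{x,i}}\theta'_i+\pi m_{z,i}\big)\ket{\Phi_S}$. Hence, once the server reveals $(m_{x,i},m_{z,i})$, the client's updated pair $(x_i,\theta_i)=\big(x'_i+m_{x,i},(-1)^{m_{x,i}}\theta'_i+\pi m_{z,i}\big)$ reproduces precisely the form $X^{x_i}Z(\theta_i)\ket{\Phi_S}$ occurring in the initial step of Protocol~\ref{prot:KW15}. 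I would then note that the random permutation of the three qubits inside each input base-location, together with the accompanying dummy and trap, only fixes the trap-colouring at that location and never touches the qubit carrying the logical input, so the register entering the computation is exactly the joint input $\ket{\Psi}_C\otimes\ket{\Phi}_S$ in the correct blinded encoding.

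With the input correctly seeded I would invoke the correctness of Protocol~\ref{prot:KW15} directly: the green copy of the base-graph (blue at the input base-locations) runs the fault-tolerant MBQC pattern for $U$, the red added qubits perform the edge-breaking that decouples the three graphs, and the isolated white primary and black added trap qubits produce the deterministic outcomes $b_i=r_i$ in an honest run. The latter is exactly the client's acceptance test, so the computation is accepted. For the output interface I would track the byproduct operators: each server-output computation qubit leaves the pattern carrying the flow-induced Pauli corrections together with the blinding $Z(\theta_i)$ and the $\pi r_i$ randomisation. Since the server's pad $X^{m_{x,i}}Z^{m_{z,i}}$ is applied only to the computation qubits it will later reclaim, while the output trap qubits are returned unpadded, the client retains the full set of final-layer traps and verifies them exactly as in Protocol~\ref{prot:KW15}; and when the client finally releases $(\theta_i,r_i,r_{j<i})$ and $(m_{x,i},m_{z,i})$ these keys successively undo the server's pad, the blinding, and the accumulated flow corrections, so the server recovers $\Tr_C(U(\ket{\Psi}_C,\ket{\Phi}_S))$.

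I expect the main obstacle to be the bookkeeping at the output layer rather than any conceptual difficulty: one must check that returning all output qubits (with the server's pad on the computation qubits) before the client sets the trap and dummy qubits aside genuinely preserves the client's ability to measure every final-layer trap, and that the released set $(\theta_i,r_i,r_{j<i})$ is exactly the data needed to invert both the blinding and the accumulated byproduct operators on each output qubit. Both amount to the flow-correction accounting already underlying Protocol~\ref{prot:KW15}, so no new principle is required, but verifying that the server's pad commutes correctly through this accounting, and that each index $j<i$ is matched to the right predecessor in the flow, is where the care lies.
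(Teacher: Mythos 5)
Your overall decomposition — reduce everything to the correctness of Protocol \ref{prot:KW15}, with explicit checks at the input and output interfaces — is exactly the route the paper takes, and your input-interface analysis is correct and in fact more explicit than the paper's: the identity
$X^{x'_i}Z(\theta'_i)X^{m_{x,i}}Z^{m_{z,i}}\ket{\Phi_S}=X^{x'_i+m_{x,i}}\,Z\bigl((-1)^{m_{x,i}}\theta'_i+\pi m_{z,i}\bigr)\ket{\Phi_S}$
is precisely what justifies the client's key update in Protocol \ref{prot:prover-input}, and the paper simply asserts this reduction without writing out the algebra.

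There is, however, a genuine error at the output interface. You claim the server's pad ``is applied only to the computation qubits it will later reclaim, while the output trap qubits are returned unpadded,'' so that the client can verify the final-layer traps ``exactly as in Protocol \ref{prot:KW15}.'' This contradicts the protocol: the trap-colouring is hidden from the server, so the server cannot tell which of the three qubits in each of its output base-locations is the computation qubit, and in Protocol \ref{prot:prover-output} it pads \emph{every} qubit $i\in O_S$ — traps and dummies included. Consequently the client cannot check those traps on receipt: a padded trap $X^{m_{x}}Z^{m_{z}}\ket{+_{\theta_t}}$ has effective angle $(-1)^{m_{x}}(\theta_t+\pi m_{z})$, so measuring it at the unadjusted angle $\theta_t+r_t\pi$ gives an outcome uncorrelated with $r_t$ for generic $\theta_t$ whenever $m_{x}=1$, and the honest run under your described check would abort with constant probability — i.e., the ``computation accepted'' half of the theorem would fail as you argue it. The correct accounting, and the reason the key-revelation step of Protocol \ref{prot:prover-output} exists at all, is that the client \emph{delays} the trap measurements in the server's output base-locations until the server has revealed $(m_{x,i},m_{z,i})$ and only then measures with the pad taken into account; the paper's proof records exactly this by noting that the server returns its keys ``to check for traps.'' Once this ordering is restored, the rest of your argument (and the final decryption via $(\theta_i,r_i,r_{j<i})$ together with $(m_{x,i},m_{z,i})$) goes through as written.
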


\begin{proof}
If client and server follow Protocol \ref{prot:Qyao}, after Step 1, where the server injected their input, we are in exactly the same situation as in Protocol \ref{prot:KW15}, with (overall) input $\rho_{in}=\ket{\Psi}_C\otimes\ket{\Phi}_S$.

During Step 2, client and server run exactly Protocol \ref{prot:KW15} and correctness follows from \cite{KW15}. The positions of dummies result in having isolated traps measured in correct basis and thus there is never an abort. From the remaining qubits, one copy of the dotted base-graph $G$, where the measurement pattern $\mathbb{M}_{\textrm{Comp}}$ is applied, results to the state $E_{k^S,k^C}(U\cdot \rho_{in})$. This is the honest global final state, encrypted with keys of both the client (secret parameters) and the server (the padding $(m_{x,i},m_{z,i})$ from the output extraction protocol, which applies to server's output registers only).

During Step 3, the client keeps the registers of their output, while returns the registers of the server's output. The protocol finishes with the server returning their keys $(m_{x,i},m_{z,i})$ to the client (to check for traps), 
while the client returns the keys (secret parameters) involved with 
the final decryption of the server's output registers.
\end{proof}

Due to the simple composition of input injection and output extraction the verification property of our QYao is directly inherited from the VUBQC. We prove this first, 
before presenting and proving the main privacy property of the QYao protocol in the next section (which exploits the verifiability).

\begin{definition}\label{verification} 
We define a 2PQC protocol to be \textbf{$\epsilon$-verifiable for the client}, if for any (potentially malicious) server, the probability of obtaining a corrupt output \emph{and} not abort is bounded by $\epsilon$. The output of the real protocol with malicious server $\tilde S$ is $\tilde\rho(\tilde S,\rho_{in})$ and we have:

\EQ{\label{eq:verification}\Delta(\tilde\rho(\tilde S,\rho_{in}),\rho_{ideal}(\rho'_{in}))&\leq&\epsilon}
where 

\AR{\rho_{ideal}(\rho_{in})&:=&p_{ok}(\mathbb{I}_{\H_C}\otimes \mathcal{C}_{\H_S})\cdot U\cdot(\rho_{in})+(1-p_{ok})(\ket{fail}\bra{fail})
}
and $\mathcal{C}_{\H_S}$ is the deviation that acts on the server's systems after they receive their outcome (a CP-map, but can be purified if we include ancilla). Also $\rho'_{in}=(\mathbb{I}_{\H_C}\otimes D_{\H_S})\cdot\rho_{in}$ is an initial state compatible with the client's input, with 
$D_{\H_S}$ the 
deviation on the input by the server. 
\end{definition}
We should note, that the server can always choose \emph{any} input from their side, and the security of the protocol is defined with respect to this ``deviated'' input. Moreover, since the deviation $\mathcal{C}_{\H_S}$ is performed at the final step of the protocol, we also have that the global state (before that deviation, i.e. at step $n-1$) obeys:

\EQ{\label{eq:verification2}\Delta(\tilde\rho^{n-1}(\tilde S,\rho_{in}),\rho^{n-1}_{ideal}(\rho'_{in}))&\leq&\epsilon}
where 
\AR{
\rho^{n-1}_{ideal}(\rho_{in})&:=&p_{ok} U\cdot(\rho_{in})+(1-p_{ok})(\ket{fail}\bra{fail})
}


\begin{Theorem}[$\epsilon$-verification for client]\label{thm:verification}
Protocol \ref{prot:Qyao} is $\epsilon$-verifiable for the client, where $\epsilon=\left(\frac{8}{9}\right)^d$ and $d=\lceil\frac\delta{2(2c+1)}\rceil$,  $c$ is the maximum degree of the base graph and $\delta$ is the number of errors tolerated on the base graph $G$.
\end{Theorem}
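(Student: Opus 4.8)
The plan is to reduce the verifiability of Protocol~\ref{prot:Qyao} to that of the underlying VUBQC Protocol~\ref{prot:KW15}, and then to invoke the trap-detection bound established for the dotted-triple-graph construction in \cite{KW15}. The guiding observation is that neither the input-injection (Protocol~\ref{prot:prover-input}) nor the output-extraction (Protocol~\ref{prot:prover-output}) subprotocols grant the malicious server any additional leverage to corrupt the output without triggering an abort, so the bound $\epsilon=(8/9)^d$ is inherited essentially unchanged. First I would argue that, from the verification standpoint, the state entering Step~2 of Protocol~\ref{prot:Qyao} is exactly the VUBQC state of Protocol~\ref{prot:KW15} acting on a (possibly server-deviated) input. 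As in the proof of Theorem~\ref{thm:correctness}, after input injection the overall register carries $\rho'_{in}=(\mathbb{I}_{\H_C}\otimes D_{\H_S})\cdot\rho_{in}$, where $D_{\H_S}$ absorbs any deviation the server applies to its own input during injection; this is precisely the ``deviated input'' permitted in Definition~\ref{verification}. The key point is that blindness is preserved: the client's extra one-time pad $X^{x'_i}Z(\theta'_i)$ together with the random permutation of the three qubits (computation, dummy and trap) at each input base-location makes the server's view independent of the trap-colouring, so any deviation it later applies is uncorrelated with the hidden trap positions, exactly as required by the KW15 analysis.

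Second I would show that output extraction does not weaken the trap test. In Protocol~\ref{prot:prover-output} the client retains all trap qubits of the final layer and returns only the server's computation output qubits, and crucially it performs the trap check \emph{before} releasing the decryption keys $(\theta_i,r_i)$. Therefore the accept/abort decision is computed from exactly the same trap data as in Protocol~\ref{prot:KW15}, and the only genuinely new server action---the post-measurement map $\mathcal{C}_{\H_S}$ on its returned output qubits---occurs after the test. This justifies the split in Definition~\ref{verification}: the global state at step $n-1$ obeys Eq.~(\ref{eq:verification2}), while $\mathcal{C}_{\H_S}$ appears only in the ideal state of Eq.~(\ref{eq:verification}) and cannot affect $p_{ok}$.

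With the reduction in place, I would invoke the verification argument of \cite{KW15}. By blindness the server's deviation may be taken to be a convex combination of Pauli operators independent of the secret trap-colouring and of the one-time pad. For the output to be corrupted yet accepted, such a Pauli must simultaneously (i) act as a nontrivial logical error, i.e. escape the fault-tolerant code that corrects fewer than $\delta/2$ errors on the base graph, and (ii) commute with every trap stabiliser, i.e. leave all white and black traps deterministic. Because the flow on the constant-degree dotted-triple-graph is local, producing a logical error forces the deviation to act nontrivially on at least $d=\lceil\frac{\delta}{2(2c+1)}\rceil$ independent trap-protected regions; in each such region the probability that a harmful Pauli avoids the corresponding trap is at most $8/9$. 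Multiplying over the $d$ independent regions yields $\Delta(\tilde\rho(\tilde S,\rho_{in}),\rho_{ideal}(\rho'_{in}))\leq (8/9)^d$, which is the claimed bound.

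I expect the main obstacle to be the composition argument of the first two steps rather than the combinatorics of the last. One must check carefully that the server's freedom to choose and later reveal its encryption keys $(m_{x,i},m_{z,i})$, and to keep part of the output, never correlates its deviation with the trap-colouring nor lets it act on a qubit \emph{after} learning that it is a trap; the delayed key-release and the fact that the client holds all traps until after the test are what make this go through. The factor $8/9$ and the grouping constant $2c+1$ are inherited directly from \cite{KW15} and require no re-derivation once the reduction to the bare VUBQC setting is established.
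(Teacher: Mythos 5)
Your proposal is correct and follows essentially the same route as the paper's proof: reduce Protocol~\ref{prot:Qyao} to the bare VUBQC Protocol~\ref{prot:KW15} by absorbing any input-injection deviation into $\rho'_{in}=(\mathbb{I}_{\H_C}\otimes D_{\H_S})\cdot\rho_{in}$, observe that the delayed key release and the client's retention of all final-layer traps confine any output-extraction deviation to the post-test map $\mathcal{C}_{\H_S}$ (with wrong key release being equivalent to a deviation before output return), and then inherit $\epsilon=(8/9)^d$ from \cite{KW15}. The only difference is cosmetic: you additionally sketch the internal KW15 argument (Pauli reduction, fault-tolerance threshold, $d$ independent regions), which the paper simply cites rather than re-derives.
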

\begin{proof}
In order to prove the verifiability for the client, 
we assume that the client is honest, while the server is malicious.

During Step 1 of Protocol \ref{prot:Qyao}, the server sends their input and also gives the keys of the one-time-padding encryption  $(m_{x,i},m_{z,i})$ from the input injection phase. It follows, that any deviation on these affects only the computation qubits of the server's input base-location, in other worlds resulting to a state: $\rho_{in}'=(\mathbb{I}_{\H_{C}}\otimes D_{\H_{S}})\cdot\rho_{in}$

During Step 2, the protocol proceeds exactly as the VUBQC, Protocol \ref{prot:KW15}, with only difference that the  qubits with base-location of the server's output, have an extra encryption with keys known to the server. This means that the client delays the measurement of the traps in those base-locations, until the next step, where they receive from the server these keys. Note, however, that the client can already check all the past traps and the ones corresponding to base-location of the clients output. 

In Step 3, the server returns the keys $(m_{x,i},m_{z,i})$ for all their output base-locations qubit, and also receives the computation qubits of those base-locations encrypted with both the client's and server's keys. Any deviation by the server at this stage, either returns different (wrong) keys to the client, or acts only on the server's output. Returning wrong keys increases the chance of abort, but in any case, it is equivalent with this deviation happening \emph{before} the return of the output to the server. The only remaining, extra, deviation is a possible deviation acting on the computation output qubits of the server, i.e.  a deviation of the form $(\mathbb{I}_{\H_C}\otimes \mathcal{C}_{\H_S})$.

It follows, that this protocol has the same verification properties with Protocol \ref{prot:KW15}, with the modified input and (server's-part) of output given by Eq. (\ref{eq:verification}). From \cite{KW15} we have that Protocol \ref{prot:KW15} is $\epsilon$-verifiable with $\epsilon=\left(\frac{8}{9}\right)^d$ and $d=\lceil\frac\delta{2(2c+1)}\rceil$, and this completes the proof.
\end{proof}

It is worth mentioning, that this verification property essentially restricts the server to behave similarly with a specious adversary, with the extra ability to abort the protocol.

\section{Proof of Privacy of the QYao protocol \label{sec:simulators}}

Recall that we defined a protocol to be secure if no possible adversary in any step of the protocol, can distinguish whether they interact with the real protocol or with a simulator that has access only to the ideal functionality (see Definition \ref{def:private}).

\begin{Theorem}\label{thm:privacy}
The QYao protocol, Protocol \ref{prot:Qyao}, that is $\epsilon_1$-verifiable for the client, is $O(\sqrt{\epsilon_2})$-private against an $\epsilon_2$-specious client and $\epsilon_1$-private against a malicious server.
\end{Theorem}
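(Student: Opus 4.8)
The plan is to follow the ideal--real simulation paradigm of \cite{DNS10} and treat the two adversaries separately, since they are of a different nature and rely on complementary properties of the underlying VUBQC: perfect blindness on the server's side and $\epsilon_1$-verifiability (Theorem \ref{thm:verification}) on the computation, versus the speciousness constraint on the client's side. In each case I would exhibit a simulator $\mathcal{S}$ of the form of the Simulator definition and bound the trace distance between the simulated view $\nu_i$ and the real reduced state at every step $i$, as required by Definition \ref{def:private}.

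For the malicious server I would first exploit the information-theoretic blindness of Protocol \ref{prot:KW15}. Throughout Steps 1 and 2 the server only ever holds qubits carrying a quantum one-time pad (the server's $X^{m_{x,i}}Z^{m_{z,i}}$ together with the client's $Z(\theta_i)$, $X^{x_i}$) and receives measurement angles $\delta_i=\phi_i'(\phi_i,\mathbf{s})+\theta_i+\pi r_i$, which are uniform on $A$ because $\theta_i$ is uniform and independent. Hence the server's reduced state at every step up to the output read-out is a fixed, input-independent object, and a simulator $\mathcal{S}(\tilde S)$ reproduces it \emph{perfectly} by running the honest server operations against a dummy client input with freshly sampled random angles (so $q_i=0$ there). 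The only input-dependent register the server ever obtains is its own output, received in Step 3 only after the trap check; setting $q_i=1$ there, the simulator queries the ideal functionality for the server's part of $U\cdot\rho_{in}$ and splices it in. By Theorem \ref{thm:verification} the protocol is $\epsilon_1$-verifiable, so (including its abort branch) the real returned state is $\epsilon_1$-close to the ideal $U\cdot\rho_{in}$; combined with the exact simulation of the earlier steps this gives $\Delta(\nu_i(\tilde S,\rho_{in}),\Tr_{\mathcal{B}_i}(\tilde\rho_i(\tilde S,\rho_{in})))\leq\epsilon_1$ for all $i$, i.e. $\epsilon_1$-privacy.

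For the specious client the argument is more delicate. I would first show that the \emph{honest} client's view is information-theoretically independent of the server's input $\ket{\Phi_S}$: in Protocol \ref{prot:prover-input} the server's qubits arrive under the random pad $X^{m_{x,i}}Z^{m_{z,i}}$, whose keys are disclosed only after each qubit has been embedded, permuted with a trap and a dummy, and frozen inside the DT(G); averaged over $(m_{x,i},m_{z,i})$ the client's marginal on those registers is maximally mixed, and the later key release does not re-correlate it with $\ket{\Phi_S}$. Consequently a simulator $\mathcal{S}(\tilde C)$ can run the honest client operations against a dummy server input and invoke the ideal functionality ($q_i=1$) only at the output-extraction step to obtain the client's genuine output $C_o$. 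To accommodate deviations I would use the defining recovery maps $(\mathcal{T}_1,\dots,\mathcal{T}_n)$ of the $\epsilon_2$-specious client, which at every step bring the deviated $\tilde\rho_i$ within $\epsilon_2$ of the honest $\rho_i$, and compose the client's own deviation with this honestly-simulatable run.

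The main obstacle, and the source of the $O(\sqrt{\epsilon_2})$ loss, is lifting this closeness from the client's marginal to the full joint state, since speciousness controls only $(\mathcal{T}_i\otimes\mathbb{I})(\tilde\rho_i)$ versus $\rho_i$ while privacy must also account for the server and reference registers. Following \cite{DNS10}, I would invoke Uhlmann's theorem: the $\epsilon_2$-closeness yields fidelity at least $1-\epsilon_2$ between recovered and honest states, and aligning the optimal purification (which the simulator is free to prepare, since the server/reference side is untouched) costs at most $\sqrt{1-(1-\epsilon_2)^2}=O(\sqrt{\epsilon_2})$ in trace distance. Combining input-independence of the honest view, exact simulatability of the honest run, and this purification step gives $\Delta(\nu_i(\tilde C,\rho_{in}),\Tr_{\mathcal{B}_i}(\tilde\rho_i(\tilde C,\rho_{in})))=O(\sqrt{\epsilon_2})$ at every step. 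I expect the bookkeeping across the three phases (input injection, blind computation, output extraction) to be routine once the single-step purification bound is in place; the genuinely hard point is checking that the recovery maps compose consistently across steps and that inserting the ideal-functionality call at the output step does not break the speciousness estimate.
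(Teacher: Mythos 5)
Your overall architecture is the same as the paper's: per-step simulators, with the malicious-server side handled by blindness for every step before key release plus a single ideal-functionality call at output extraction bounded via the $\epsilon_1$-verifiability of Theorem \ref{thm:verification}, and the specious-client side handled by a dummy-input honest run with an ideal call only at the client's output step and an $O(\sqrt{\epsilon_2})$ loss traced to Uhlmann's theorem. (Two minor remarks: the paper's server simulator sends halves of EPR pairs together with random $\delta_i$'s, following \cite{DFPR13}, rather than running the honest protocol on a dummy client input; besides giving blindness this is what lets the simulator defer and steer the computation, and it is the mechanism behind feeding the server's \emph{deviated} input $\rho'_{in}$, not $\rho_{in}$, into the ideal functionality --- a point your write-up leaves implicit.)

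There is, however, a genuine gap at the crux of the client-side argument. Speciousness plus Uhlmann gives you, for each \emph{fixed} input $\rho_{in}$, an isometry $T_i:\H_{\tilde C}\rightarrow\H_C\otimes\H_{\hat C}$ and a residual ancilla state $\hat\rho_i$ such that $(T_i\otimes\mathbb{I})(\tilde\rho_i(\tilde C,\rho_{in}))$ is $O(\sqrt{\epsilon_2})$-close to $\hat\rho_i\otimes\rho_i(\rho_{in})$ --- but a priori both $T_i$ and $\hat\rho_i$ depend on $\rho_{in}$, which the simulator does not know. Your parenthetical justification (``the simulator is free to prepare it, since the server/reference side is untouched'') does not address this; the untouched reference register is irrelevant to whether the alignment data are input-independent. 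The paper resolves exactly this with Lemma \ref{rushing1} (the ``no-extra information'' adaptation of the Rushing Lemma of \cite{DNS10}): one runs the Uhlmann argument on a superposition $\frac{1}{\sqrt 2}(\ket{\psi_1}\ket{1}+\ket{\psi_2}\ket{2})$ over an extended reference, projects back onto each branch, and uses the triangle inequality over pairs of inputs to show that a \emph{single} isometry and a \emph{single fixed} state $\hat\rho_i$ work for all inputs simultaneously (this is where the constant inflates from $2\sqrt{2\epsilon_2}$ to $12\sqrt{2\epsilon_2}$). Only with that input-independence is the simulated view $\nu_i(\tilde C,\rho_{in})=\Tr_{\H_{S_i}}\left((T_i^\dagger\otimes\mathbb{I})(\hat\rho_i\otimes\rho_i(\varphi^*))\right)$ a legitimate simulator. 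Relatedly, your framing of the obstacle is backwards: speciousness already controls the \emph{full joint} state $(\mathcal{T}_i\otimes\mathbb{I})(\tilde\rho_i)$ versus $\rho_i(\rho_{in})$, so nothing needs to be ``lifted from the marginal to the joint state''; the difficulty is inverting the recovery map on the adversary's register alone, uniformly over the unknown input.
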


To prove the theorem we need to introduce simulators for each step of the protocol and each possible adversary. Below we define those simulators and prove that they are as close to the real protocol as requested from Theorem \ref{thm:privacy}. Since in our setting the two parties have different roles and maliciousness we consider the simulators for each party separately.

\subsection{Client's simulators}

The client is $\epsilon$-specious and this means that for each step $i$ there exist a map $\mathcal{T}_i: L(\H_{\tilde V_i})\rightarrow L(\H_{V_i})$, that 

\EQ{
\Delta\left((\mathcal{T}_i\otimes\mathbb{I})(\tilde\rho_i(\tilde V,\rho_{in})),\rho_i(\rho_{in})\right)\leq\epsilon
}
Following the proof of the ``Rushing Lemma'' of \cite{DNS10}, we obtain a similar lemma for each step of the protocol, where it shows that there is no extra information stored in the ancilla's of the specious adversary:

\begin{Lemma}[No-extra Information]\label{rushing1}
Let $\Pi_U=(A,B,n)$ be a correct protocol for two party evaluation of $U$. Let $\tilde A$ be any $\epsilon$-specious
adversary. Then there exists an isometry $T_i:\tilde A_i\rightarrow A_i\otimes \hat{A}$ and a (fixed) mixed state $\hat{\rho_i}\in D(\hat{A_i})$ such that for all joint input states $\rho_{in}$,

\EQ{\label{eq:rushing}
\Delta\left((T_i\otimes\mathbb{I})(\tilde\rho_i(\tilde A,\rho_{in})),\hat{\rho_i}\otimes\rho_i(\rho_{in})\right)\leq 12\sqrt{2\epsilon}
}
where $\rho_i(\rho_{in})$ is the state in the honest run and $\tilde\rho_i(\tilde A,\rho_{in})$ is the real state (with the specious adversary $\tilde A$).
\end{Lemma}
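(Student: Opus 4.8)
The plan is to follow the Dupuis--Nielsen--Salvail strategy, whose core is the fact that \emph{the specious recovery map acts only on the adversary's register}, combined with a rigidity argument for almost-pure marginals. First I would reduce to the pure-state picture: by Stinespring, dilate every honest operation into the memory spaces and purify the input with the reference $\mathcal{R}$, so that the honest state $\rho_i(\rho_{in})$ may be taken to be a \emph{pure} state $\ket{\rho_i}$ on $\mathcal{A}_i\otimes\mathcal{B}_i\otimes\mathcal{R}$; likewise dilate the adversary's (CPTP) operations, enlarging $\tilde{\mathcal A}_i$, so that $\tilde\rho_i(\tilde A,\rho_{in})$ is the restriction of a pure state $\ket{\tilde\rho_i}$ on $\tilde{\mathcal{A}}_i\otimes\mathcal{B}_i\otimes\mathcal{R}$. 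I would then take $T_i$ to be the Stinespring isometry of the recovery operation $\mathcal{T}_i$, so $T_i:\tilde{\mathcal A}_i\to\mathcal A_i\otimes\hat A$ with $\mathcal T_i(\cdot)=\Tr_{\hat A}[T_i\cdot T_i^\dagger]$; crucially $T_i$ is fixed once the adversary is fixed and does \emph{not} depend on $\rho_{in}$. Setting $\ket{\sigma_i}:=(T_i\otimes\mathbb{I})\ket{\tilde\rho_i}$, this is pure on $\mathcal A_i\otimes\hat A\otimes\mathcal B_i\otimes\mathcal R$, and the speciousness bound (\ref{eq:specious}) reads $\Delta\big(\Tr_{\hat A}\ket{\sigma_i}\bra{\sigma_i},\ket{\rho_i}\bra{\rho_i}\big)\le\epsilon$.

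The key step is a rigidity fact for purifications: if a pure bipartite state $\ket{\sigma_i}$ on $\hat A\otimes M$, with $M:=\mathcal A_i\mathcal B_i\mathcal R$, has its marginal on $M$ within trace distance $\epsilon$ of a \emph{pure} state $\ket{\rho_i}\bra{\rho_i}$, then $\ket{\sigma_i}$ is itself $O(\sqrt\epsilon)$-close to a product $\ket{\xi}_{\hat A}\otimes\ket{\rho_i}_M$. I would establish this directly: Fuchs--van de Graaf turns the $\epsilon$ trace-distance bound into $\bra{\rho_i}\big(\Tr_{\hat A}\ket{\sigma_i}\bra{\sigma_i}\big)\ket{\rho_i}\ge 1-2\epsilon$, which equals $\big\|(\mathbb{I}_{\hat A}\otimes\ket{\rho_i}\bra{\rho_i})\ket{\sigma_i}\big\|^2$; writing $(\mathbb{I}_{\hat A}\otimes\ket{\rho_i}\bra{\rho_i})\ket{\sigma_i}=\ket{\eta}_{\hat A}\otimes\ket{\rho_i}_M$ exhibits a product component of weight $\ge 1-2\epsilon$, so the orthogonal remainder has norm $\le\sqrt{2\epsilon}$ and, after normalising $\ket{\xi}:=\ket{\eta}/\|\eta\|$, one gets $\Delta(\ket{\sigma_i}\bra{\sigma_i},\ket{\xi}\bra{\xi}\otimes\ket{\rho_i}\bra{\rho_i})\le\sqrt{2\epsilon}$. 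Tracing out $M$ identifies the candidate ancilla state $\hat\rho_i:=\ket{\xi}\bra{\xi}$, which is automatically close to pure.

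The genuine obstacle, and what earns the name ``No-extra Information'', is showing that a \emph{single} $\hat\rho_i$ serves all inputs, since the construction above yields an a priori input-dependent $\ket{\xi}=\ket{\xi(\rho_{in})}$. I would kill this dependence with a coherent-input argument: for any two inputs feed the protocol a purification of their equal superposition, appending a fresh control register $\mathcal R'$ to the reference that no party touches. Because every map in play (the honest isometries, the adversary's isometries, and $T_i$) is linear and input-independent, $\ket{\sigma_i^{\mathrm{mix}}}$ splits along $\mathcal R'$ into branch states $\ket{\sigma_i^{(b)}}\approx\ket{\xi^{(b)}}\otimes\ket{\rho_i^{(b)}}$, while applying the rigidity step \emph{to the mixed input itself} gives $\ket{\sigma_i^{\mathrm{mix}}}\approx\ket{\xi^{\mathrm{mix}}}\otimes\ket{\rho_i^{\mathrm{mix}}}$. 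Projecting both expressions onto the branches $\ket{b}_{\mathcal R'}$ forces $\ket{\xi^{(0)}}\approx\ket{\xi^{\mathrm{mix}}}\approx\ket{\xi^{(1)}}$, so all the $\ket{\xi}$ agree up to $O(\sqrt\epsilon)$ and we may fix $\hat\rho_i:=\ket{\xi^{\mathrm{mix}}}\bra{\xi^{\mathrm{mix}}}$. Finally I would assemble the error by the triangle inequality across the three approximations (speciousness, rigidity, and the two branch projections), each contributing a multiple of $\sqrt{2\epsilon}$, and I expect the only delicate bookkeeping to be (i) tracking these constants to reach exactly $12\sqrt{2\epsilon}$ in (\ref{eq:rushing}), and (ii) checking that dilating the adversary and the recovery map preserves the speciousness hypothesis; the conceptual content lies entirely in the rigidity step together with the coherent-input trick.
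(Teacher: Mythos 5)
Your proposal is correct and takes essentially the same route as the paper's proof (itself adapted from the ``rushing lemma'' of \cite{DNS10}): purify all operations, obtain the product form $\ket{\xi}\otimes\ket{\rho_i}$ from speciousness — your direct projection/rigidity computation is just an explicit proof of the Uhlmann step the paper invokes — and then remove the input dependence of the ancilla state via the coherent-superposition/control-register trick followed by triangle inequalities. The remaining differences are purely presentational (slightly better constants, and the bookkeeping of fixing $\hat{\rho_i}$ from one reference input rather than from two superposed pairs).
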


\begin{proof} This proof follows closely the proof of the ``rushing lemma'' of \cite{DNS10}, where one can find further details. For simplicity we assume pure $\rho_{in}$ (where it holds in general by convexity). Consider two different states $\ket{\psi_1},\ket{\psi_2}$ in $\mathcal{A}_0\otimes\mathcal{B}_0\otimes\mathcal{R}$, and we extend the space $\mathcal{R'}=\mathcal{R}\otimes\mathcal{R}_2$ with $\mathcal{R}_2=\textrm{span}\{\ket{0},\ket{1}\}$. We define the state $\ket{\psi}=1/\sqrt{2}(\ket{\psi_1}\ket{1}+\ket{\psi_2}\ket{2})$. Due to the speciousness of $\tilde{A}$ and using Ulmann's theorem, there is an isometry $T_i:\tilde{\mathcal{A}}_i\rightarrow \mathcal{A}_i\otimes \hat{\mathcal{A}}$ and a state $\tilde{\rho}\in D(\hat{\mathcal{A}})$ such that

\EQ{\Delta\left((T_i\otimes\mathbb{I})\cdot \tilde{\rho_i}(\tilde{A},\psi),\tilde{\rho}\otimes\rho_i(\psi)\right)\leq 2\sqrt{2\epsilon}}
The state $\tilde{\rho}$ in general is not independent of the input $\psi$ so there are few more steps required. By noting that projection and partial trace are trace non-increase maps (by projecting on $\ket{1}$ subspace and tracing out the $\mathcal{R}'$ subspace), we obtain:

\EQ{\Delta\left((T_i\otimes\mathbb{I})\cdot \tilde{\rho_i}(\tilde{A},\psi_1),\tilde{\rho}\otimes\rho_i(\psi_1)\right)\leq 4\sqrt{2\epsilon}}
and similarly for $\psi_2$. We repeat the same using $\psi_1,\psi_3$ and with initial state $\ket{\psi'}=1/\sqrt{2}(\ket{\psi_1}\ket{1}+\ket{\psi_3}\ket{2})$  and obtain:

\EQ{\Delta\left((T_i\otimes\mathbb{I})\cdot \tilde{\rho_i}(\tilde{A},\psi_1),\tilde{\rho'}\otimes\rho_i(\psi_1)\right)\leq 4\sqrt{2\epsilon}}
By the triangular inequality we get $\Delta(\tilde{\rho},\tilde{\rho'})\leq 8\sqrt{2\epsilon}$. This means that for any state $\ket{\chi}$, there exists a state $\hat{\rho}\in\hat{\mathcal{A}}$ with $\Delta(\tilde{\rho},\hat{\rho})\leq 8\sqrt{2\epsilon}$ such that:

\EQ{\Delta\left((T_i\otimes\mathbb{I})\cdot \tilde{\rho_i}(\tilde{A},\chi),\hat{\rho}\otimes\rho_i(\chi)\right)\leq 4\sqrt{2\epsilon}}
and the Lemma follows by using once more the triangle inequality. 
\end{proof}

Before introducing the simulators we give some intuitive arguments. The client, during the input-injection step, receives a fully one time padded quantum state, while in later steps (up until the last) has no legitimate quantum state (all the quantum states are in the server's side).
Therefore, using the above Lemma, we can already see that the client has no information about the actual quantum state during any step before the last. This is illustrated by the fact that the simulators for these steps can run the honest protocol with a random (but fixed) input and the client's view is the same (they cannot distinguish a run with the correct or different input in any step before the last). The simulator for the final step (after receiving the quantum states from the server) is more subtle, as it necessarily involves a call to the ideal functionality.

\begin{proof} of Theorem \ref{thm:privacy}.

\hskip 02cm

\noindent\textbf{Simulator upon receiving server's encoded input}: There is a full one time pad on server's input, thus a simulator can use a random fixed state $\varphi^*$ instead of the real state $\Tr_{C}(\rho_{in})$ and the client cannot distinguish them. 

\hskip 02cm

\noindent\textbf{Simulator after returning server's input and before receiving output}: The simulator runs the honest protocol using instead of $\rho_{in}$ a fixed random input $\varphi^*$. The honest state with input $\chi$ at this stage is:

\EQ{\label{eq:10}
\rho_i(\chi)&&:= \\&& \nonumber\left(\ket{k_i}\bra{k_i}\otimes\ket{b_i}\bra{b_i}\right)_{\H_C}\otimes \left( E_{k_i}(\sigma_i(\chi))\otimes \ket{d_i}\bra{d_i}\otimes\ket{+_{\theta_{t_i}}}\bra{+_{\theta_{t_i}}} \right)_{\H_S}
}
where $\sigma_i(\chi)$ is the evolution of the input state, when the gates corresponding to the $i$th step have been performed (here we collectively call the secret parameters used to encrypt the state in each step as $k_i$). Later, we will be more specific about the different secret parameters. Also $d_i,\theta_{t_i}$ are the dummy and trap qubits of the $i$th layer (we can include the qubits of future layers as well with no difference in the remaining argument). It is easy to see that the reduced state on the client's side $\H_C$ is independent from the input state $\chi$. The simulator is then constructed:

\begin{enumerate}
\item The simulator runs the protocol with fixed but random input $\varphi^*$, to obtain $\rho_i(\varphi^*)$.
\item The simulator obtains the fixed state $\hat{\rho_i}$ and the isometry $T_i$, that both are independent of the input.
\item The simulated view is then defined to be:

\EQ{
\nu_i(\tilde C,\rho_{in}):=\Tr_{
\H_{S_i}} \left( (T_i^\dagger\otimes\mathbb{I})(\hat{\rho_i}\otimes\rho_i(\varphi^*))\right)
}

\end{enumerate}

\noindent We can easily see that 

\EQ{\nu_i(\tilde C,\rho_{in})=\Tr_{
\H_{S_i}} \left( (T_i^\dagger\otimes\mathbb{I})(\hat{\rho_i}\otimes\rho_i(\rho_{in}))\right)
=\Tr_{
\H_{S_i}} \left( (T_i^\dagger\otimes\mathbb{I})(\hat{\rho_i}\otimes\rho_i(\varphi^*))\right)\nonumber
}
as $T_i^\dagger$ does not acts on $\H_{S_i}$ and thus commutes with the partial trace, while tracing-out $\H_{S_i}$ leaves the reduced state independent from the input (see Eq. (\ref{eq:10})). Since isometries leave invariant the trace-distance, while partial trace is non-increasing, we use Eq. (\ref{eq:rushing}) and we have

\EQ{
\Delta\left(\Tr_{\H_{S_i}}\left(\tilde\rho_i(\tilde C,\rho_{in})\right),\nu_i(\tilde C,\rho_{in})\right)\leq 12\sqrt{2\epsilon}
}

\noindent\textbf{Simulator after receiving client's output}: First we need to introduce some notation and conventions: 

\begin{itemize}

\item By $k^S,k^C$ we define the padding of the server's and client's output respectively. In more detail these keys are functions of the trap-colouring (trap positions) of the last layer, the rotations $\theta$, the secret parameters $r$'s and measurement results $b$'s of the previous two layers. We collectively denote the rest, i.e. all the other secret keys, as $k^R$. 

\item We also assume that the classical measurement outcomes $b$'s are in the Hilbert space of the client, while, being classical and public, could be copied to the server's Hilbert space too (adding this copy does not affect the client's simulators and complicates the notation). Moreover, classical results corresponding to trap measurements are denoted $b_t$, while other are simply $b$. The trap measurements outcomes (in the honest run) are $b_t=r_t$ where $r_t$ is one of the secret parameters included in $k^R$.

\item The final layer (unmeasured) qubits are separated to: (i) output qubits (those are of the server and the client, and it includes the purification of their inputs), (ii) dummy qubits collectively denoted $d^S,d^C$ for server/client unmeasured dummies and (iii) trap qubits denoted as $\theta_t^S,\theta_t^C$.  The dummies and trap qubits are in tensor product to the rest (in the honest run).

\item The Hilbert spaces containing the output qubits (and the related purification of honest run) are denoted $\H_{S_o},\H_{C_o}$, while the remaining qubits are $\H_S,\H_C$.

\item For brevity we denote $[b]:=\ket{b}\bra{b}$ and similarly for the other states.

\end{itemize}
With the above notations, the honest run of the protocol at this step, with some input $\chi$ is:

\EQ{
\rho_i(\chi)&=&\left(E_{k_i^C,k_i^S}(U(\chi))\right)_{\H_{C_o}\otimes\H_{S_o}}\nonumber\\ 
& &\otimes\left([k^S]\otimes[k^C]\otimes[k^R]\otimes[b]\otimes[b_t]\otimes[d^C]\otimes[\theta_t^C]\right)_{\H_C}\nonumber\\ 
& &\otimes \left([d^S]\otimes[\theta_t^S]\right)_{\H_S}
}
We define:

\EQ{
\varrho(\chi,k^C,k^S)&:=& \left(E_{k_i^C,k_i^S}(U(\chi))\right)_{\H_{C_o}\otimes\H_{S_o}}\nonumber\\
\sigma&:=&\left([k^S]\otimes[k^C]\otimes[k^R]\otimes[b]\otimes[b_t]\otimes[d^C]\otimes[\theta_t^C]\right)_{\H_C}\otimes \left([d^S]\otimes[\theta_t^S]\right)_{\H_S}\nonumber\\
\rho_i(\chi)&=&\varrho(\chi,k^C,k^S)\otimes\sigma
}
It is worth pointing-out that $\sigma$ is independent from the input $\chi$. Now the simulator is constructed in the following steps:
\begin{enumerate}
\item The simulator obtains from the client the choice of secret keys $k^C,k^S,k^R$ (that can be viewed as part of the client's input).

\item The simulator sets $b_t=r_t$. For the other measurement outcomes $b$, the simulator chooses randomly a bit value. To ensure that the $b$'s obtained are indistinguishable from the real honest protocol, the simulator can run the full protocol with some random input $\varphi$. Since in the honest run, the values of $b$'s do not depend on (are not correlated to) the input, the outcomes that the simulator returns would be indistinguishable from those of a run with the correct input. Finally, speciousness ensures that this state is also close to the real protocol.

\item The simulator, using the previous steps, constructs the state $\sigma$.

\item The simulator uses the state $\hat{\rho_i}$ from Lemma \ref{rushing1} for the $i$th step, which is a fixed state (independent of the input).

\item The simulator calls the ideal functionality and receives the state $\Tr_{\H_{S_o}}\left(\rho_i(\rho_{in})\right)$.

\item The simulator uses the keys and the state received from the ideal functionality and constructs $\Tr_{\H_{S_o}}\left(\varrho(\rho_{in},k^C,k^S)\right)$. 

\item The simulator obtains the operator $\mathcal{T}_i$ from the definition of specious, and constructs the isometry $T_i$ acting from $\H_{\tilde C}\rightarrow\H_{C}\otimes\H_{\hat{C}}$.

\item The simulated view is then:

\EQ{
\nu(\tilde C,\rho_{in}):=\Tr_{\H_S}\left((T^\dagger_i\otimes\mathbb{I})\left(\hat{\rho_i}\otimes\Tr_{\H_{S_o}}\left(\varrho(\rho_{in},k^C,k^S)\right)\otimes\sigma\right)\right)
}

\end{enumerate}
It is clear, that in all this construction, $\rho_{in}$ appears only through $\Tr_{\H_{S_o}}\left(\rho_i(\rho_{in})\right)$, which is the ideal output of the client. Now we prove that this simulated view is $\delta$-close to the view of the client in the real protocol.

\noindent We start from Eq. (\ref{eq:rushing}) and we obtain:

\EQ{
\Delta\left(\Tr_{\H_S,\H_{S_o}}\left(\tilde\rho_i(\tilde C,\rho_{in})\right),\Tr_{\H_S,\H_{S_o}}\left(\left(T^\dagger_i\otimes\mathbb{I}\right)\left(\hat{\rho_i}\otimes\rho_i(\rho_{in})\right)\right)\right)\leq 12\sqrt{2\epsilon}
\nonumber}
Now, since $T_i^\dagger$ does not act on either of the server's Hilbert spaces, it commutes with the partial trace. We can then see that:

\EQ{
\nu(\tilde C,\rho_{in})=\Tr_{\H_S,\H_{S_o}}\left(\left(T^\dagger_i\otimes\mathbb{I}\right)\left(\hat{\rho_i}\otimes\rho_i(\rho_{in})\right)\right)
}
and this concludes the first part of the proof as we have a    $(12\sqrt{2\epsilon})$-private protocol.

\end{proof}

\subsection{Server's simulators}

Our QYao protocol is secure against a fully malicious server that can deviate in any possible way and can also cause an abort. We will use the fact that QYao protocol is $\epsilon$-verifiable for the client, i.e. Eq. (\ref{eq:verification}) and Eq. (\ref{eq:verification2}) hold. In many classical protocols, to prove security against malicious adversaries one has to restrict their actions to essentially honest-but-curious adversaries. Here the verification property plays such a role, as the condition obtained can be used in a similar way as the speciousness of the client.

We can see that, before the client sends the secret parameters, the server is totally blind, i.e. the server's reduced state at all times is the totally mixed state (for all qubits in the DT(G), including their own input qubits, after they are injected). This is highlighted by the fact that one can run the full protocol (before releasing the keys), without choosing the (client's) input (see simulator below). After receiving the keys from the client, to simulate the server's view we need a call to the ideal functionality, and at this point to use Eq. (\ref{eq:verification}). 

\vskip 0.2cm
\noindent \emph{Continuation of the proof of Theorem \ref{thm:privacy}.}

\noindent\textbf{Simulator before receiving keys}: 
The simulator, instead of sending qubits in one of these states $\{\ket{+_\theta},\ket{0},\ket{1}\}$, sends one side of an EPR pair $\ket{\psi}=\frac{1}{\sqrt{2}}(\ket{01}+\ket{10})$ for each qubit to the server. Then chooses an angle $\delta_i$ at random for each of the measurement angles. The simulator can measure (if they wish) their qubits (of the EPR pairs) in suitable way and angles that inserts any input they wish and performs any computation. This can be done after \emph{all} measurements of the server have taken place (see 
\cite{DFPR13}). It follows that the server cannot obtain any information about the client's input. 


\noindent\textbf{Simulator after receiving keys}: The simulator is constructed using the following steps:

\begin{enumerate}

\item Simulator prepares multiple Bell states $\ket{\psi}=\frac{1}{\sqrt{2}}(\ket{01}+\ket{10})$. Sends the one qubit of each pair to the server and instructs them to entangle the qubits as they would in the normal protocol. 

\item Simulator for each qubit chooses randomly an angle $\delta_i$ and instructs the server to measure in this angle.

\item Simulator obtains from the server's (malicious) strategy the parameter $p_{ok}$ 
(see Eq. (\ref{eq:verification})).

\item With probability $(1-p_{ok})$ the simulator returns abort. Otherwise performs the remaining steps.

\item Simulator calls the ideal functionality and obtains the state $\Tr_{\H_C}(U(\rho'_{in}))$, where $\rho'_{in}=(\mathbb{I}_{\H_C}\otimes D_{\H_S})\cdot \rho_{in}$ is the deviated input that the corrupted server 
inputs.



\item 
The simulator encrypts the outcome using $k^S$, and sends the output qubits of the server back in the state: $E_{k^S}\left(
\Tr_{\H_C}(U(\rho'_{in}))\right)$

\item  The simulator returns the keys $k^S$ to the server.

\end{enumerate}

\noindent The last step of the protocol is that the server decrypts their output, which we denote as $D_{k^S}(\cdot)$, where $D_{k^S}(E_{k^S}(\cdot))=\mathbb{I}$. It follows, that the server's view of the real protocol after the key exchange is:

\EQ{\Tr_{\H_C}\left(E_{k^S}\left(\tilde\rho^{n-1}(\tilde S,\rho'_{in})\right)\right)
}
From Eq. (\ref{eq:verification2}) we obtain

\EQ{\Delta\left(\Tr_{\H_C}\left(E_{k^S}\left(\tilde\rho^{n-1}(\tilde S,\rho_{in})\right)\right),\Tr_{\H_C}\left(E_{k^S}(\rho^{n-1}_{ideal}(\rho'_{in}))\right)\right)\leq\epsilon
}
as partial trace is distance non-increasing operation. Using the definition of $\rho^{n-1}_{ideal}(\rho'_{in})$ from Eq. (\ref{eq:verification2}) we see that

\EQ{\Tr_{\H_C}\left(E_{k^S}(\rho^{n-1}_{ideal}(\rho'_{in}))\right)&=&\Tr_{\H_C}\left(E_{k^S}\left(p_{ok} U\cdot(\rho'_{in})+(1-p_{ok})(\ket{fail}\bra{fail})\right)\right)\nonumber\\
\Tr_{\H_C}\left(E_{k^S}(\rho^{n-1}_{ideal}(\rho'_{in}))\right)&=&\nu(\tilde P,\rho_{in})
}
We have now proved that the simulated view is $\epsilon$-close to the real view of the server, just after the key exchange, and thus proving that the protocol is $\epsilon$-private against malicious server $\tilde S$. \qed

\section{Non-interactive QYao 
\label{sec:non-interactive}}

Following the classical approach of \cite{GKR08} we exploit our QYao protocol to construct one-time program. To do so we simply need to remove the classical online communication of the QYao protocol using the classical hardware primitive of secure ``one-time memory'' (OTM), which is essentially a non-interactive oblivious transfer. 
The obtained one-time quantum programs can be executed only once, where the input can be chosen at any time. Classically the challenge in lifting the Yao protocol for two-party computation to  one-time program, was the issue of malicious adversary. However, our QYao protocol is already secure against a malicious evaluator without any extra primitive or added overhead.

Recall that the interaction in our QYao is required for two reasons. First, from the server's perspective this is done to obtain the measurement angles $\delta_i$ that perform the correct computation, while these could not be computed offline as they depend on the measurement outcomes $b_{j<i}$ of certain qubits measured before $i$ but after the preparation stage  (and on secret parameters $(\theta_i,r_i,r_{j<i},T)$ and computation angle $\phi_i$ that are known from start). Second, from the client's perspective, the results of measurements are needed to provide with a ``proof'' that the output is correct, by testing for deviation from the trap outcomes.

\noindent\textbf{Removing the interaction:} An obvious solution for the first issue raised, is to have the client compute $\delta_i$ for all combinations of previous outcomes $b_{j<i}$, and then store in an OTM the different values, while the server chooses and learns the entry of the OTM corresponding to the outcomes obtained. This solution, at first, appears to suffer from an efficiency issue as one may think that for each qubit the client needs to compute $\delta_i$ for \emph{all} combination of past outcomes which grows exponentially (as would the size of the OTM used). However, a closer look in the dependencies of corrections in MBQC for the typical graphs used, 
shows that the measurement angle of qubit $i$ depends on at most a constant number of qubits, being within the two previous ``layers'' (past neighbours or past neighbours of past neighbours). This is evident from the flow construction \cite{DK2006,BKMP2007}, that guarantees that corrections can be done, and the explicit form of dependencies involves only neighbours and next-to neighbours in the graph $G$.  A flow is defined by a function ($f:O^c\rightarrow I^c$) from measured qubits to non-input qubits and a partial order $(\preceq)$ over the vertices of the graph such that $\forall i:i\preceq f(i)$ and $\forall j\in N_G(i): f(i)\preceq j$, where $N_G(i)$ denotes the neighbours of $i$ in $G$. Each qubit $i$ is $X$-dependent on $f^{-1}(i)$ and $Z$-dependent on all qubits $j$ such that $i\in N_G(j)$.

\begin{definition}[Past of qubit $i$] We define $P_i=Z_i\cup X_i$ to be the set of qubits $j$ that have $X$ or $Z$ dependency on $i$.
\end{definition}
\begin{definition}[Influence-past of qubit $i$]
We define influence-past $c_i$, of qubit $i$ to be an assignment of an outcome $b_j\in\{0,1\}$ for all qubits $j\in P_i$.  
\end{definition}
For each influence-past $c_i$, there exists a unique value of $\delta_i(c_i)$. While $c_i$ provides all the necessary dependencies for the client to compute $\delta_i$, this is not known to the server. This could be problematic, as the server is expected to open the OTM using the past outcomes with a labelling consistent with their own knowledge. The \emph{true} dependencies depend on the actual flow of the computation, which is hidden from the server, as the positions of dummies (that break the DT(G) to the three graphs) are not known to the server. From here on we restrict attention to DT(G). This is resolved by defining: 
\begin{definition}[Extended-past of qubit $i$]
We define the \emph{extended past $EP_i$ of qubit $i$} to be the set of all qubits that for some trap-colouring are in the past of $i$.
\end{definition}
Similarly we define \emph{extended-influence-past} (EIP) of qubit $i$ (it is clear that $\delta_i$ has trivial dependency on all outcomes of EIP that are not in the actual influence-past). The extended past has also finite cardinality, as it is evident that the only qubits that can affect a qubit with base-location $i$ are those that belong to (the finite) base-locations that contain the qubits that have true dependency (i.e. neighbours or next-to neighbours base-locations).

\noindent\textbf{Ensuring verification:} The interaction is also important from the clients perspective, in order to verify that the server did not deviate. Again, naively one could suggest that the server can return all the measurement outcomes at the end. However, this suffers from a potential attack (not present in normal VUBQC). The underlying reason is that there is nothing to force the server to commit that the $b_i$'s that will return are the same as the ones used for selecting measuring angles. For example, the server measures qubit $1$ at $\delta_1$, and obtains $b_1$ (which will be returned at the end) but uses $b_1'=b_1\oplus 1$ when opening the OTM corresponding to qubit $j$ where $1\in EP_j$. If qubit $j$ is computation qubit, this leads to measuring at a wrong angle, while if qubit $j$ is trap qubit, it has no (real) dependence on $b_1$ and thus the angle recovered is the correct one, and the server never causes abort while corrupts the computation. To avoid this problem, we introduce a method that uses some flag bit-string $l_i$.

The client provides for each qubit, the angle $\delta_i$ and a flag bit-string $l_i$ of length $m$. This flag bit-string is a (one-time-padded) flag, of whether a trap failed in the (extended) past of qubit $i$ or not. 
At the final step of the protocol, the server returns the final layer qubits and \emph{all} the flags $l_i$. 
Here we will denote $b_{j;i}$ to mean the outcome of qubit $j$ measurement that the client uses for the computation of the measurement angle for qubit $i$, i.e. the value of $b_j$ selected by the server when opening the $i$th OTM.

The client chooses uniformly at random, for each qubit $i$ an $m$-bit string $l^0_i$, to denote that none of its past qubits ($j\in P_i$) was a failed trap. The string $l^0_i$ will be called ``accept'' flag, while all other strings $l_i\neq l^0_i$ are the ``reject'' flags. The OTM's are prepared according to Protocol \ref{prot:OTM}.

\begin{algorithm}[H]
\caption{Preparation of OTM's for non-interactive QYao} 
 \label{prot:OTM}

\noindent\textbf{Description of OTM's:} We will use an (at most) one out of $K$ OTM, where $K=2^{\max_i|EP_i|}$. Note, that the first layer does not need an OTM, since (i) we give the angle directly (does not depend on anything) and (ii) there is no trap in the previous layer and thus no need to give a flag. Moreover, for the last layer, while the qubits are not measured (and thus there is no corresponding angle $\delta$) we will use an OTM to obtain the last flags (that correspond to testing traps in the previous layers).

\begin{itemize}
\item \textbf{Labels} of the cells of the OTM at $i$: Each qubit is labelled according to the outcomes of the extended past qubits $b_{j;i}|j\in EP_i$, in other words according to the extended influence past $c_i$. 

\item\textbf{Content} of the cells of the OTM at $i$: In each cell of the OTM we have a pair of numbers $(\delta_i,l_i)$, (that can be viewed as a single string where the first three bits give $\delta$ and the remaining the flag). The $\delta_i$ is the unique correct $\delta$ for the particular outcomes of past qubits $b_j|j\in P_i$.

\EQ{\delta_i(c_i)=(-1)^{s_i^{X_i}}\phi_i+\theta_i+\pi(r_i\oplus s_i^{Z_i})}
The flag $l_i$ depends on the outcomes $b_j$ of the extended past $j\in EP_i$.
In particular:

\begin{enumerate}
\item If for all traps $t$ in the extended past of $i$, i.e. $t\in EP_i$, 
$b_{t;i}=r_t$, we return $l^0_i$, i.e. accept flag.
\item Otherwise, we return a random string $l_i\neq l^0_i$, i.e. reject flag.
\end{enumerate}
\end{itemize}
\end{algorithm}
According to this protocol, the flag that the server obtains when opening the OTM at qubit $i$, is reject if and only if in the extended past of this qubit there is a failed trap outcome. We can see that returning a wrong outcome for some measurement $b_i$, implies opening an entry of the OTM with wrong flag only if the qubit is a trap one (as in regular VUBQC), while it still returns the accept flag if it is a dummy or computation qubit. Here we should note, that even if the client knows that one flag is a reject flag (i.e. has one particular $l_i\neq l_i^0$), the probability of guessing the correct flag is only $\epsilon_f:=(2^m-1)^{-1}$. This, intuitively, will force the client to return the flag obtained from the OTM (or abort with high probability), provided that $m$ is chosen suitably. We can now give the non-interactive QYao Protocol \ref{prot:NI-QYao}. 

\begin{algorithm}[H]
\caption{Non-interactive QYao} 
 \label{prot:NI-QYao}
 \vskip 0.2 cm
\textbf{Assumptions} \\ Client and server want to jointly compute a unitary as in Protocol \ref{prot:Qyao}. The client has $N$ OTM's that are $1$-out-of-$K$, i.e. one OTM per qubit, with sufficient entries to store a pair of $(\delta_i,l_i)$ measurement angle and flag bit-string (of length at least $m\geq\log (\frac1\epsilon+1)$), for each extended influence past of the qubit.




\textbf{Protocol}
\begin{enumerate}
\item Client and server interact according to Protocol \ref{prot:prover-input} to obtain the server's input locations. Client also sends the qubits of DT(G) aftr choosing secret parameters ($r_i,\theta_i,d_i$ and trap-colouring).

\item The client, for each qubit and each extended influence past, computes $\delta_i(c_i)$. Then prepares one OTM per qubit as described in Protocol \ref{prot:OTM}. 

\item Server performs the measurements according to the first layer of angles received $\delta_i$ (directly as in Protocol \ref{prot:prover-input}). Then open the next layer OTM's using the outcomes $b_i$ of their measurements. Uses the new measurement angle revealed $\delta_j$, while records the flag bit-string $l_j$. Iterates until the last layer OTM is opened (and the second last layer is measured). The final layer OTM's return only a flag.

\item Server and client interact according to Protocol \ref{prot:prover-output} so that the server obtains their output. The only difference is that the client, before returning the keys, in order to accept checks the flags (instead of checking the trap outcomes of measured qubits) and the final layer traps.

\end{enumerate}
\end{algorithm}


\begin{Theorem}\label{thm:verification0}
Protocol \ref{prot:NI-QYao} is $\epsilon$-verifiable for the client, with $\epsilon$ same as the VUBQC protocol that is used.
\end{Theorem}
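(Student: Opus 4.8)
The plan is to reduce the verifiability of Protocol \ref{prot:NI-QYao} to the verifiability of the interactive Protocol \ref{prot:Qyao}, which we have already established in Theorem \ref{thm:verification} to be $\epsilon$-verifiable with the same $\epsilon$ as the underlying VUBQC. The essential observation is that the two protocols differ only in the \emph{mechanism} by which measurement angles $\delta_i$ are delivered to the server and by which the client learns of trap failures: interaction in Protocol \ref{prot:Qyao} versus one-time-memory lookups plus returned flags in Protocol \ref{prot:NI-QYao}. My first step is therefore to argue that the OTM mechanism exposes to the server \emph{no more} information than the interactive mechanism. Each OTM at qubit $i$ reveals exactly one cell $(\delta_i,l_i)$ indexed by the extended-influence-past $c_i$ the server selects, and then self-destructs; since in the interactive case the server also receives precisely one angle $\delta_i$ per qubit (computed by the client from the outcomes the server reports), the blindness and the reduced-state argument of the server's simulator carry over verbatim. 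The finite cardinality of $EP_i$ (established via the flow construction and locality of the DT(G)) guarantees $K=2^{\max_i|EP_i|}$ is a constant, so the construction is well-defined and efficient.

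Next I would analyze the two genuinely new deviation channels available to a malicious server in the non-interactive setting, and show each collapses into a deviation already accounted for by Eq. (\ref{eq:verification}). The first is the ``inconsistent-$b$'' attack described before Protocol \ref{prot:OTM}: the server opens the $i$th OTM with some $b_{j;i}$ while using a different value $b_j$ when opening a later OTM. I would show, exactly as in the discussion preceding the theorem, that if qubit $j$ is a computation qubit this merely corrupts the computation and is indistinguishable from an honest-looking deviation on the computation register (which verifiability already tolerates with probability $\epsilon$), while if $j$ is a trap qubit any inconsistent opening forces the client's flag to become a reject flag in the extended-influence-past of some later qubit, because $l_i$ is, by construction in Protocol \ref{prot:OTM}, the accept string $l^0_i$ if and only if all trap outcomes in $EP_i$ satisfy $b_{t;i}=r_t$. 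The second new channel is the server tampering with the returned flags themselves. Here the key quantitative point is that the accept flag $l^0_i$ is a uniformly random $m$-bit string unknown to the server until an OTM legitimately reveals it; hence a server who has opened a reject cell can guess the corresponding accept string only with probability $\epsilon_f=(2^m-1)^{-1}$. Choosing $m\geq\log(1/\epsilon+1)$ ensures $\epsilon_f\leq\epsilon$, so flag-forgery contributes at most $\epsilon$ to the probability of corrupt-and-not-abort.

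With these two reductions in hand, the final step is bookkeeping: I would argue that every malicious strategy in Protocol \ref{prot:NI-QYao} induces a corresponding strategy in Protocol \ref{prot:Qyao}, with the event ``corrupt output and no abort'' occurring with at most the same probability, possibly plus the additive flag-forgery term $\epsilon_f$. Since $\epsilon_f\leq\epsilon$ and the flag mechanism is precisely what replaces the client's trap-outcome check of the interactive protocol, the detection probability is governed by the same trap structure of the DT(G) as in Theorem \ref{thm:verification}, and the bound $\epsilon=(8/9)^d$ is inherited unchanged. I expect the main obstacle to be the careful treatment of the trap-qubit case in the inconsistent-$b$ attack: one must verify that flipping a reported outcome on a trap qubit genuinely propagates into a reject flag for \emph{some} qubit whose OTM the server opens and whose flag is ultimately returned, rather than being silently absorbed by the trivial dependencies of an extended-influence-past. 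Making this propagation rigorous requires invoking the locality of the flow and the fact that every trap lies in the extended past of at least one returned flag, which is where I would spend the bulk of the technical effort.
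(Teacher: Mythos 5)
Your high-level plan (reduce the non-interactive protocol to an interactive one, plus a separate flag-forgery term) superficially resembles the paper's, but the central step of your reduction does not go through as stated, and it is exactly where the paper has to do new work. You claim that every malicious strategy in Protocol \ref{prot:NI-QYao} induces a corresponding strategy in Protocol \ref{prot:Qyao} with the event ``corrupt output and no abort'' occurring with at most the same probability. This cannot be a strategy-to-strategy mapping: the inconsistent-$b$ attack has \emph{no} counterpart in Protocol \ref{prot:Qyao}, where the server reports a single $b_i$ once and the client uses that one value for every future dependency. The paper therefore does not reduce to Protocol \ref{prot:Qyao} at all; it reduces (its Stage 2) to a \emph{modified} interactive VUBQC in which the server is explicitly allowed to report per-destination outcomes $b_{i;j}$, and then (its Stage 3) re-runs the KW15 verifiability analysis for that modified protocol: blindness reduces the server's deviation to convex combinations of Pauli attacks, the fault-tolerant encoding forces at least $d=\lceil\delta/(2(2c+1))\rceil$ independent base-locations to be corrupted, and the per-destination freedom does not lower the probability of hitting a trap because a wrong $b_{t;j}$ for even one $j$ already produces a reject flag; hence $\epsilon=(8/9)^d$ survives unchanged. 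Your proposal never performs this counting. Instead you dispose of the computation-qubit case by saying such a corruption is ``indistinguishable from an honest-looking deviation on the computation register (which verifiability already tolerates with probability $\epsilon$).'' That is circular: mid-protocol deviations on computation qubits are precisely what verification must catch, and the tolerance $\epsilon$ is the \emph{conclusion} of the trap-counting argument, not a license to wave those attacks away. The only deviations the ideal functionality absorbs are the input deviation $D_{\H_S}$ and the final map $\mathcal{C}_{\H_S}$ acting on the server's own output.

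Two further points. First, you locate the ``bulk of the technical effort'' in showing that a flipped trap outcome propagates into some returned reject flag; in fact this is immediate from Protocol \ref{prot:OTM}: the flag $l_i$ equals $l_i^0$ if and only if $b_{t;i}=r_t$ for \emph{all} traps $t\in EP_i$, regardless of whether $\delta_i$ genuinely depends on them, and the last-layer OTMs exist precisely so that every measured trap feeds into some returned flag. The genuinely hard part is the opposite case — attacks on qubits whose colour (trap versus computation versus dummy) the server cannot know, which is where the blindness-based probability bookkeeping is needed. Second, your additive accounting ($\epsilon$ plus the forgery term $\epsilon_f$) yields a bound of roughly $2\epsilon$, which is weaker than the theorem's claim that the constant is the \emph{same} as for the underlying VUBQC. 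The paper avoids this loss by treating flag-guessing as an alternative branch of strategy rather than an additive error: if the server guesses a flag, abort occurs with probability $1-\epsilon_f\geq 1-\epsilon$, so that branch is trivially $\epsilon$-verifiable on its own, and convexity of the trace distance over mixed strategies keeps the overall bound at $\epsilon$.
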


\begin{proof}
This proof consists of three stages. First, similarly to the proof of theorem \ref{thm:verification}, we show that the verifiability for the client of the non-interactive QYao, reduces to the verifiability of the same VUBQC protocol with modifications for the server's input and output. The second stage is to observe that the optimal strategy for an adversarial server is to return the flags from the opened OTM's. This makes the verifiability property for this protocol identical with an interactive protocol with the only modification that the server can return different values for the measurement outcome $b_i$ depending on which future qubit the client needs the outcome ($b_{i;j}$). The final stage, is to show that this modified (interactive) verification protocol, is $\epsilon$-verifiable with same $\epsilon$ as Protocol \ref{prot:KW15}.

\noindent\textbf{Stage 1:} Following the proof of theorem \ref{thm:verification}, we can see that the non-interactive QYao protocol \ref{prot:NI-QYao} is $\epsilon$-verifiable for the client if the corresponding non-interactive VUBQC protocol that is used during step 3 of Protocol \ref{prot:NI-QYao} is $\epsilon$-verifiable, with deviated input $\rho'_{in}$ and a final deviation $(\mathbb{I}_{\H_C}\otimes \mathcal{C}_{\H_S})$ on the server's output (computation) qubits.

\noindent\textbf{Stage 2:} 
If the server attempts to guess a flag (given the extra knowledge of one bit-string $l_i\neq l_i^0$), they cause an abort with high probability ($1-\epsilon_f$, where $\epsilon_f:=(2^m-1)^{-1}$). However, since by assumption $m\geq\log (\frac1\epsilon+1)$, the probability of abort is so high that makes the protocol trivially verifiable, as it is $\epsilon$-close to the ideal state Eq. (\ref{eq:verification}) with $(1-p_{ok})=(1-\epsilon_f)$.

It follows that the adversary (trying to maximising their cheating chances) should return the flags found in the OTMs. This is equivalent with an interactive VUBQC protocol, that (i) the server for each qubit $i$ returns multiple values of the measurement outcome $b_{i;j}$, one for each $j$ in the extended future qubit of $i$ and (ii) the client uses those outcomes to compute the $\delta_i$'s and (iii) the client aborts only when they receive at least one trap outcome wrong $b_{t;j}\neq r_t$ for any $j$.

\noindent\textbf{Stage 3:} We should now show the $\epsilon$-verifiability of the modified interactive VUBQC protocol described above. The proof follows the same steps of the proof of verifiability of Protocol \ref{prot:KW15} in \cite{KW15}. The first steps exploit the blindness to reduce the possible attacks (of adversarial server) to convex combination of Pauli attacks. Then it is noted, that since the computation is encoded in an error-correcting code (that corrects up to $\delta/2$ errors), there is a minimum number ($d=\lceil\frac\delta{2(2c+1)}\rceil$) of independent base-locations that need to be corrupted to cause an error. 

For the proof of verifiability, as in \cite{fk,KW15}, we make the assumption that if the minimum number of attacks that could corrupt the computation occurs, then the computation \emph{is} corrupted. This is clearly not true, but is sufficient to provide a bound on the probability of corrupt \emph{and} accept (which is the ``bad'' case, that the server manages to cheat). Then, given this minimum number of attacks, the probability of abort is computed and found to be greater than $1-\epsilon$, and thus the protocol is verifiable. 

In our case, there is the difference that for each measured qubit $b_i$ of the original protocol, we have multiple qubits $b_{i;j}$. Once again, we make the assumption of minimum corrupted qubits, we need at least $d=\lceil\frac\delta{2(2c+1)}\rceil$ independent base-locations, and we can allow for each base-location to corrupt a single $b_{i;j}$ for one specific $j$. However, this does not change the probability of hitting a trap, as it suffices to give wrong value $b_{t,j}$ for one $j$ to cause an abort. We then obtain again, that the probability of abort (in the minimum corruptable attack) is at least $(1-\epsilon)$, (for $\epsilon=\left(\frac89\right)^d$) and the protocol is indeed $\epsilon$-verifiable.
\end{proof}

\begin{Theorem}
Protocol \ref{prot:NI-QYao} that is $\epsilon_1$-verifiable for the client, is $O(\sqrt{\epsilon_2})$-private against an $\epsilon_2$-specious client and $\epsilon_1$-private against a malicious server.
\end{Theorem}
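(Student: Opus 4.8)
The plan is to reduce the privacy of the non-interactive Protocol~\ref{prot:NI-QYao} to that of the interactive QYao Protocol~\ref{prot:Qyao}, already established in Theorem~\ref{thm:privacy}, by showing that replacing the online classical interaction with the one-time-memory (OTM) calls of Protocol~\ref{prot:OTM} leaks no additional information to either party. The only structural differences from Protocol~\ref{prot:Qyao} are that (i) instead of the client sending each $\delta_i$ after receiving $b_i$, the client precomputes $\delta_i(c_i)$ for every extended-influence-past and stores it, together with a one-time-padded flag $l_i$, in an OTM, and (ii) the server opens exactly one entry per OTM using its measurement outcomes, whereupon the remaining entries self-destruct. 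Both simulators from Theorem~\ref{thm:privacy} therefore carry over almost verbatim, and I would reuse the No-extra-Information Lemma~\ref{rushing1} and the verification bound of Theorem~\ref{thm:verification0} as the two workhorses.

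For the \emph{$\epsilon_2$-specious client}, the key observation is that preparing the OTMs is a purely classical computation on secret parameters ($\theta_i,r_i$, the trap-colouring, and the $\phi_i$) that the client already holds; it introduces no new quantum register and in particular no additional purifying system on the client's side. Consequently the client's reduced state at every step before the last remains independent of the true input, exactly as in Eq.~(\ref{eq:10}). I would therefore run the same simulators as in the proof of Theorem~\ref{thm:privacy}: at each intermediate step the simulator executes the honest protocol (now preparing OTMs) on a fixed random input $\varphi^*$ and invokes Lemma~\ref{rushing1} to produce the fixed ancilla $\hat{\rho_i}$ and isometry $T_i$; at the final step it calls the ideal functionality and rebuilds $\varrho(\rho_{in},k^C,k^S)\otimes\sigma$. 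Since only the bound of Lemma~\ref{rushing1} is used, the resulting trace distance is again $12\sqrt{2\epsilon_2}=O(\sqrt{\epsilon_2})$.

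For the \emph{malicious server}, the crucial primitive property is the OTM self-destruct: the server reads exactly one cell of each OTM, so the classical information it extracts is precisely one angle $\delta_i$ per qubit, identical to the interactive transcript. The accompanying flags are one-time-padded, hence uniformly random and uninformative before the client releases the final keys, and blindness (inherited from the VUBQC resource) keeps every qubit maximally mixed on the server's side. I would therefore reuse the two-stage simulator of Theorem~\ref{thm:privacy}: before key release the simulator distributes halves of Bell pairs together with random angles $\delta_i$, so the server's view is information-theoretically independent of the client's input; after key release it reads $p_{ok}$ from the server's strategy, aborts with probability $1-p_{ok}$, and otherwise calls the ideal functionality and re-encrypts with $k^S$. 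The closeness of this view to the real one is exactly Eq.~(\ref{eq:verification2}), supplied here by Theorem~\ref{thm:verification0}, giving $\epsilon_1$-privacy.

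The main obstacle I expect is justifying rigorously that the OTMs leak nothing beyond the interactive transcript. On the server side this rests on the self-destruct guarantee, but one must also check that the \emph{existence} of unopened cells — whose labels are the extended-influence-past outcomes and whose contents are the blinded $\delta_i(c_i)$ — reveals nothing: the labels are quantities the server can compute unaided, and each $\delta_i$ is uniformly randomised by the uniform $\theta_i$ and by $r_i$, so the single opened entry is distributed exactly as the interactive $\delta_i$. On the client side the subtlety is that the flag strings $l_i$ are classical data the client itself generated; I would confirm that, in the honest run where trap outcomes satisfy $b_t=r_t$, every $l_i$ is the accept flag $l_i^0$ determined solely by input-independent secret parameters, so the flags are already absorbed into the input-independent state $\sigma$ and carry no extra purification that could help the client distinguish the input.
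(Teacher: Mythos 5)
Your proposal is correct and follows essentially the same route as the paper: both reuse the simulators of Theorem~\ref{thm:privacy}, relying on the facts that the OTMs can be prepared with no information about the client's input (so the server's pre-key-release simulator carries over, with Bell-pair halves and random OTM contents replacing the interactive random angles) and that the $\epsilon_1$-verifiability established in Theorem~\ref{thm:verification0} supplies the bound for the server's final-step simulator, while the client's side is handled by Lemma~\ref{rushing1} exactly as before. Your added checks on the unopened OTM cells and the input-independence of the flags are sound elaborations of points the paper leaves implicit, not a different argument.
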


\begin{proof} We need simulators for the client 
only during input injection and output extraction (since the client does not participate in the evaluation phase, in the non-interactive protocol). During these steps, the simulators are identical with those in Protocol \ref{prot:Qyao}.

The server's simulators until before the output extraction, can be identical with the ones of Protocol \ref{prot:Qyao}, where the interaction is replaced with the preparation of OTM's. It is important to note that all OTM's can be constructed with \emph{no} information about the client's input and thus from the simulator. Finally, the simulator for the final step of Protocol \ref{prot:Qyao}, needs to only use the property that the Protocol is $\epsilon$-verifiable for the client. We proved in Theorem \ref{thm:verification0} that this is the case for Protocol \ref{prot:NI-QYao} and thus we can use the same simulator.  
\end{proof}


\acknowledgments{We would like to thank Fr\'{e}d\'{e}ric Dupuis for discussions on the definition of specious adversaries. Funding from EPSRC grants  EP/N003829/1 and EP/M013243/1 is acknowledged. }

\appendix
\section{Measurement-based quantum computation \label{app:mbqc}}

An MBQC pattern is fully characterised by the graph (representing the entangled state), default measurement angles $\phi_i$, an order of measurements (determined by the flow) and a set of corrections that determine the actual measurement angle of each qubit (modify the default measurement angles with respect to previous measurement outcomes).

For the brickwork state, or any subset of the square lattice state, the flow $f(i)$ takes a qubit to the same row next column. E.g. $i=(k,l)$, then $f(i)=(k+1,l)$. Given a flow $f(\cdot)$, and measurement outcomes $s_j$, the actual (corrected) measurement angle is given:

\EQ{
\phi'_i(\phi_i,s_{j<i}):=\phi_i(-1)^{s_{f^{-1}(i)}}+\pi \sum_{j|f(j)\in N_G(i)\setminus f(i)}s_j
}
where $N_G(i)$ denotes the set of neighbours (in the graph) of vertex $i$. It is easy to see that the corrected angle for qubit $i$ has an $X$-correction (i.e. a $(-1)$ factor before $\phi_i$) from the one qubit $f^{-1}(i)$ and $Z$-corrections (i.e. $\pi$ addition) from some neighbours of neighbours, which for the brickwork state is at most two (but in any case are constant in number).

We present here diagrams taken from \cite{bfk} showing how to translate a universal set of gates to the MBQC measurement patterns using the brickwork graphs.

\begin{figure}[H]
\includegraphics[width=1\columnwidth]{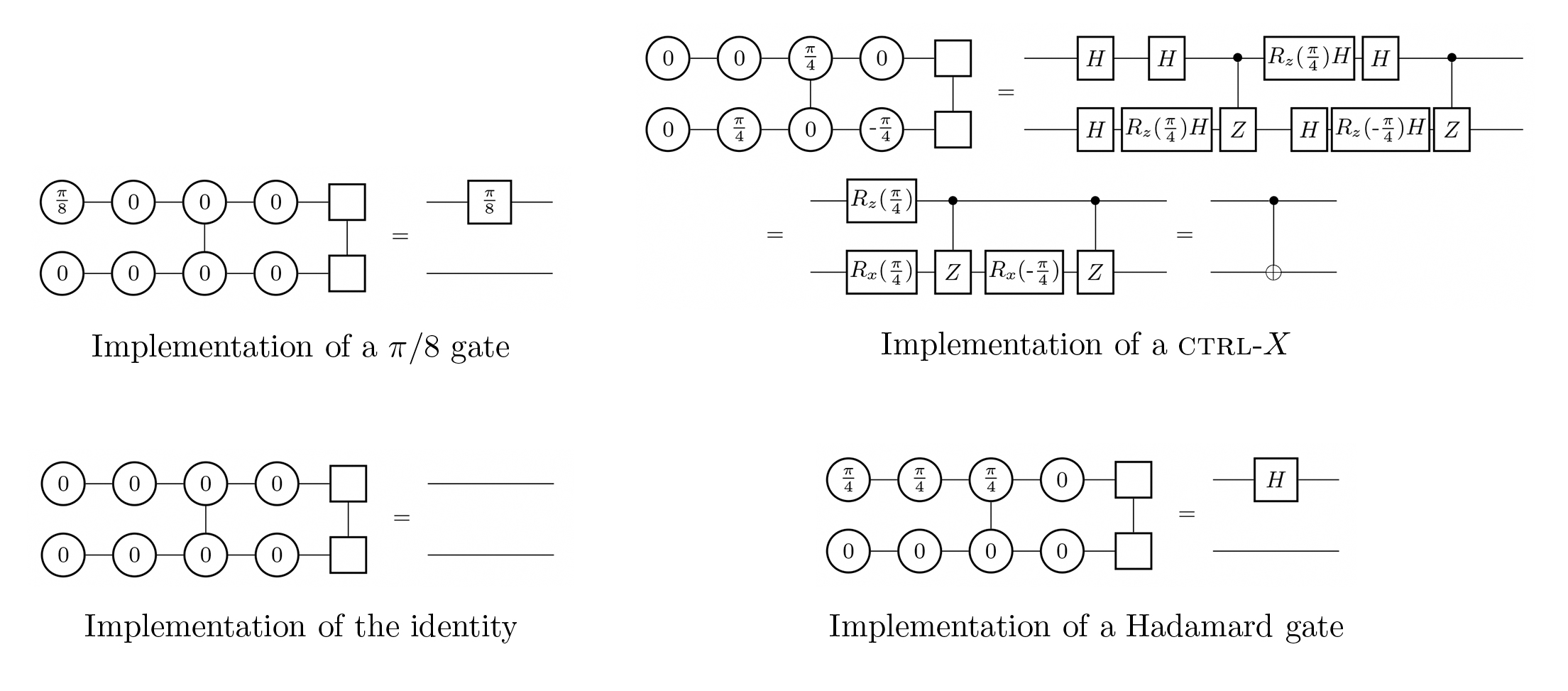}

\label{mbqc-paterns}
\end{figure}

Finally, we note that in UBQC, the measurement angle that the client instructs the server to measure is given: $\delta_i=\phi_i'(\phi_i,b_{j<i})+\theta_i+\pi r_i=C(i,\phi_i,\theta_i,r_i,\mathbf{s})$

\section{On specious adversaries definitions \label{app:specious}}

The definition of specious, as given by Eq. (\ref{eq:specious}), includes $\rho_i(\rho_{in})$ which is the honest state in step $i$, for \emph{any} input state $\rho_{in}$. There are two subtle issues with what this means, (i) in relation with what are the possible input states $\rho_{in}$, and (ii) in relation to whether the random secret parameters are considered ``input of the computation or not. If we take the most restrictive case, it leads to a very weak adversary. In \cite{DNS10}, there are some impossibility results, that stem from taking a (slightly) stronger form of this adversary. Here we see separately these two issues and finally discuss the differences with \cite{DNS10}.

\subsection{Restricting to classical inputs}\label{app:specious-classical}

The first important point is that we observe  
that a specious adversary, under certain conditions, could be weaker than an honest-but-curious classical adversary. A specious adversary is allowed to do actions/operations, that for any (allowed) input, can be ``undone'' by actions on his side if there is an audit.

Now we consider a (trivial) example, that specious is weaker than classical honest-but-curious. Assume, that as first step of a protocol, a party (that is specious adversary) $\tilde{A}$ receives a one-time-padded quantum input $E_k(\ket{\psi}_S)$ of the other party. As second step, the $\tilde{A}$ returns the padded quantum state back.  If the input is considered to be a general (unknown) quantum state, a specious adversary cannot do the following action before returning the system $S$:  

\EQ{(\wedge X)_{SA} E_k(\ket{\psi}_S)\otimes\ket{0}_A
} 
There is no map that $\tilde{A}$ can apply to their private system/ancilla $A$ alone, and obtain the correct state $\rho_i(\rho_{in})$, because for a general $E_k(\ket{\psi}_S)$ the resulting joint state is entangled. 

However, imagine that we are actually considering is a classical protocol which means that the input is in computational basis, i.e. either $\ket{0}$ or $\ket{1}$. In that case, the $\wedge X$ simply copies the ciphertext $E_k(\ket{\psi}_S)$, which is exactly the action that an honest-but-curious classical adversary \emph{can} do. Note that in this specific case, the resulting state is no longer an entangled state and thus could be recovered by acting only on system $S$. Nonetheless, the definition of specious requires to recover the correct state at each step for \emph{any} possible input. 

It is exactly because of this property (that specious under certain conditions is weaker than classical honest-but-curious) that we can avoid using OT for inserting the input of the server (unlike the Yao protocol). In generalisations where we will have stronger adversaries, we will once again need to have OT for the input insertion.

\subsection{About the secret random parameters}

The second important subtlety of the specious definition, is related with the secret (random) parameters that the adversary can choose.
Specious adversaries should be able to recover the global state $\rho_i(\rho_{in})$ in any step. However this definition may be somewhat ambiguous. In general, the quantum state in any step is also function of secret random parameters of the two parties $k_A,k_B$, i.e. we have $\rho_i(\rho_{in},k_A,k_B)$. We can (and generally do) treat the secret keys as part of the input of the two parties (i.e. part of $\rho_{in}$), but in this case the specious adversary is essentially requested to reconstruct precisely the state $\rho_i(\rho_{in},k_A,k_B)$ at step $i$.

However, we can imagine an adversary $\tilde A$ that could reproduce the state $\rho_i(\rho_{in},k'_A,k_B)$ instead, i.e. reproduce a state that would be correct for step $i$, if the secret parameter was $k'_A$ instead of the $k_A$ that was given at the start. In this case, the adversary $\tilde A$ is not specious with the standard definition that we use. On the other hand, intuitively, since $k_A$ is a secret parameter (not known by anyone but $\tilde A$ until this step), it should not matter what is this value, and reproducing $\rho_i(\rho_{in},k'_A,k_B)$ should be sufficient for a version of quantum honest-but-curious.

We will define \emph{strong specious} adversary, to be the adversary that they are required to have CP maps $\mathcal{T}_i$ acting only on their subsystem such that they can reproduce the state $\rho_i(\rho_{in},k_A,k_B)$ for at least one secret key $k_A$ (not determined in advance, and thus of their choice).

We will again give a simple example to make the distinction of these two flavour of specious adversaries. Imagine a protocol that a (strong or weak) specious adversary $\tilde{A}$ is supposed to return an unknown state $\ket{\psi}$ padded with their key $E_{k_A}(\ket{\psi})$. A strong specious adversary can cheat by keeping the state $\ket{\psi}$ and returning instead the one side of an EPR pair (and keep the other side). Then if an audit occurs, the adversary can use their side of the EPR pair and teleport the state $\ket{\psi}$ back, where the resulting state at the honest side is $E_{k_m}(\ket{\psi})$, with $k_m$ being (randomly) determined by the outcome of the Bell measurement that teleports the state. A weak specious adversary, on the other hand, cannot do this. The state that is supposed to be returned needs to be padded with the key $k_A$ that is fixed from the start of the protocol, while $k_m$ is randomly determined during the audit (by the Bell measurement outcome).

\subsection{Regarding secure SWAP no-go theorem}

It is not difficult to see, that a protocol that is secure against the standard (weak) specious adversary can be made secure against the strong specious adversary by modifying the protocol to request that in every step a new random (secret) parameter appears, a commitment to its value is made. This means that requesting an adversary to be weak specious is practically equivalent with strong specious adversary where commitments are allowed. 

Here it is worth mentioning, that in \cite{DNS10}, while not explicitly stated, had their protocol secure against the stronger version of specious adversaries (unlike our protocol). This difference in the definition of specious adversaries (implicit assumption of the stronger adversaries) also resolves the apparent contradiction of our result (no secure primitive needed) with their no-go theorem (a secure SWAP is needed for 2PQC even for specious adversaries). As is mentioned in \cite{DNS10}, their no-go would not hold if commitments were possible, which is exactly what our weaker definition essentially permits.

\bibliographystyle{unsrt}
\bibliography{report}

\end{document}